\newcommand{\bigo}[1]{\ensuremath{\mathcal{O}(#1)}}
\newcommand{\true}{\ensuremath{\textsc{true}}}
\newcommand{\false}{\ensuremath{\textsc{false}}}
\newcommand{\nil}{\ensuremath{\textsc{null}}}
\newcommand{\cmark}{\ding{51}}
\newcommand{\xmark}{\ding{55}}
\newcommand{\Gtri}{\ensuremath{G_{\Delta}}}
\newcommand{\Gfcc}{\ensuremath{G_{\mathrm{FCC}}}}
\newcommand{\Connected}{\ensuremath{\textsc{Connected}}}
\newcommand{\Read}{\ensuremath{\textsc{Read}}}
\newcommand{\Write}{\ensuremath{\textsc{Write}}}
\newcommand{\Lock}{\ensuremath{\textsc{Lock}}}
\newcommand{\candidate}{\ensuremath{\texttt{candidate}}}
\newcommand{\leader}{\ensuremath{\texttt{leader}}}
\newcommand{\nbrcand}{\ensuremath{\texttt{nbrcand}}}
\algrenewcommand\ALG@beginalgorithmic{\small}
\algrenewcommand\alglinenumber[1]{\scriptsize #1:}
\newcommand{\multiline}[1]{%
  \begin{tabularx}{\dimexpr\linewidth-\ALG@thistlm}[t]{@{}X@{}}
    #1
  \end{tabularx}
}
\newif\ifcomment
\newif\iffigabbrv
\newcommand{\figtext}{\iffigabbrv Fig.\else Figure\fi}
\title{Asynchronous Deterministic Leader Election in Three-Dimensional Programmable Matter}
\titlerunning{Asynchronous Deterministic Leader Election in 3D Programmable Matter}
\author{Joseph L. Briones}{School of Computing and Augmented Intelligence, Arizona State University, Tempe, AZ, USA}{joseph.briones@asu.edu}{https://orcid.org/0000-0002-5847-4263}{NSF (CCF-2106917) and U.S.\ ARO (MURI W911NF-19-1-0233).}
\author{Tishya Chhabra}{Corona del Sol High School, Tempe, AZ, USA}{tishyac3.141@gmail.com}{https://orcid.org/0000-0002-3555-1078}{}
\author{Joshua J. Daymude}{Biodesign Center for Biocomputing, Security and Society,\\Arizona State University, Tempe, AZ, USA}{jdaymude@asu.edu}{https://orcid.org/0000-0001-7294-5626}{The Momental Foundation's Mistletoe Research Fellowship and the ASU Biodesign Institute.}
\author{Andr\'ea W. Richa}{School of Computing and Augmented Intelligence, Arizona State University, Tempe, AZ, USA}{aricha@asu.edu}{https://orcid.org/0000-0003-3592-3756}{NSF (CCF-1733680, CCF-2106917) and U.S.\ ARO (MURI W911NF-19-1-0233).}
\authorrunning{J.\ L.\ Briones, T.\ Chhabra, J.\ J.\ Daymude, and A.\ W.\ Richa}
\keywords{Programmable matter, self-organization, leader election, three-dimensional}
\begin{document}

\maketitle

\begin{abstract}
    Over three decades of scientific endeavors to realize \textit{programmable matter}, a substance that can change its physical properties based on user input or responses to its environment, there have been many advances in both the engineering of modular robotic systems and the corresponding algorithmic theory of collective behavior.
    However, while the design of modular robots routinely addresses the challenges of realistic three-dimensional (3D) space, algorithmic theory remains largely focused on 2D abstractions such as planes and planar graphs.
    In this work, we formalize the \textit{3D geometric space variant} for the \textit{canonical amoebot model} of programmable matter, using the \textit{face-centered cubic (FCC) lattice} to represent space and define local spatial orientations.
    We then give a distributed algorithm for \textit{leader election} in connected, contractible 2D or 3D geometric amoebot systems that deterministically elects exactly one leader in $\bigo{n}$ rounds under an unfair \textit{sequential} adversary, where $n$ is the number of amoebots in the system.
    We then demonstrate how this algorithm can be transformed using the concurrency control framework for amoebot algorithms (DISC 2021) to obtain the first known amoebot algorithm, both in 2D and 3D space, to solve leader election under an unfair \textit{asynchronous} adversary.
\end{abstract}

\section{Introduction} \label{sec:intro}

Since its inception~\cite{Toffoli1991-programmablematter}, programmable matter has been envisioned as a material that can dynamically alter its physical properties based either on user input or autonomous sensing of the environment.
Many strides have been made to realize this technology over the last several decades, both from the practical perspective of modular robotics and the algorithmic contributions of distributed computing theory.
However, when it comes to realistic considerations of three-dimensional (3D), gravity-bound space, advances in robotics have outpaced their distributed computing counterparts.
Modular, reconfigurable robotic systems such as Proteo~\cite{Yim2001-distributedcontrol}, SlidingCube~\cite{Fitch2008-millionmodule}, 3D M-Blocks~\cite{Romanishin2015-3dmblocks}, RollingSphere~\cite{Luo2020-obstaclecrossingstrategy}, FireAnt3D~\cite{Swissler2020-fireant3d}, FreeBOT~\cite{Liang2020-freebotfreeform,Luo2022-adaptiveflow}, and 3D Catoms~\cite{Piranda2018-designingquasispherical,Thalamy2021-engineeringefficient} routinely address engineering challenges both in individual module design (such as binding and locomotion) and in collective reconfiguration (such as gravity stability) that are inherent to 3D environments.
Besides a few notable exceptions~\cite{Yamauchi2017-planeformation,Yamada2022-searchmetamorphic}, the vast majority of abstract models of mobile robots and programmable matter treat space as two-dimensional (2D) planes or planar graph structures~\cite{Chirikjian1994-kinematicsmetamorphic,DAngelo2020-asynchronoussilent,Derakhshandeh2014-amoebotba,Flocchini2019-distributedcomputing,Michail2016-simpleefficient,Patitz2014-introductiontilebased,Woods2013-activeselfassembly}, simplifying their assumptions but limiting their application to practical domains.

Our goal is to move theoretical programmable matter research towards the 3D reality by extending the established \textit{amoebot model} of programmable matter~\cite{Daymude2021-canonicalamoebot,Derakhshandeh2014-amoebotba}.
Research using the amoebot model has historically assumed 2D discretizations of space, most commonly the ``geometric'' triangular lattice (\figtext~\ref{fig:2dlattice}).
Under this treatment of space, amoebot algorithms have been developed for a myriad of problems including leader election, shape formation, object coating and enclosure, bridging, and more (see~\cite{Daymude2021-collaboratingmotion,Daymude2019-computingprogrammable,Daymude2021-canonicalamoebot} for an overview of results).
The recent \textit{canonical amoebot model}~\cite{Daymude2021-canonicalamoebot} systematized the many disparate assumptions appearing in these works into categories, each with a set of ``assumption variants'' of varying strengths.
In this paper, we formalize
the \textit{3D geometric space variant} for the canonical amoebot model that discretizes space as the \textit{face-centered cubic (FCC) lattice} (\figtext~\ref{fig:3dlattice}).

\begin{figure}[tb]
    \centering
    \begin{subfigure}{.47\textwidth}
    	\centering
    	\includegraphics[height=5cm]{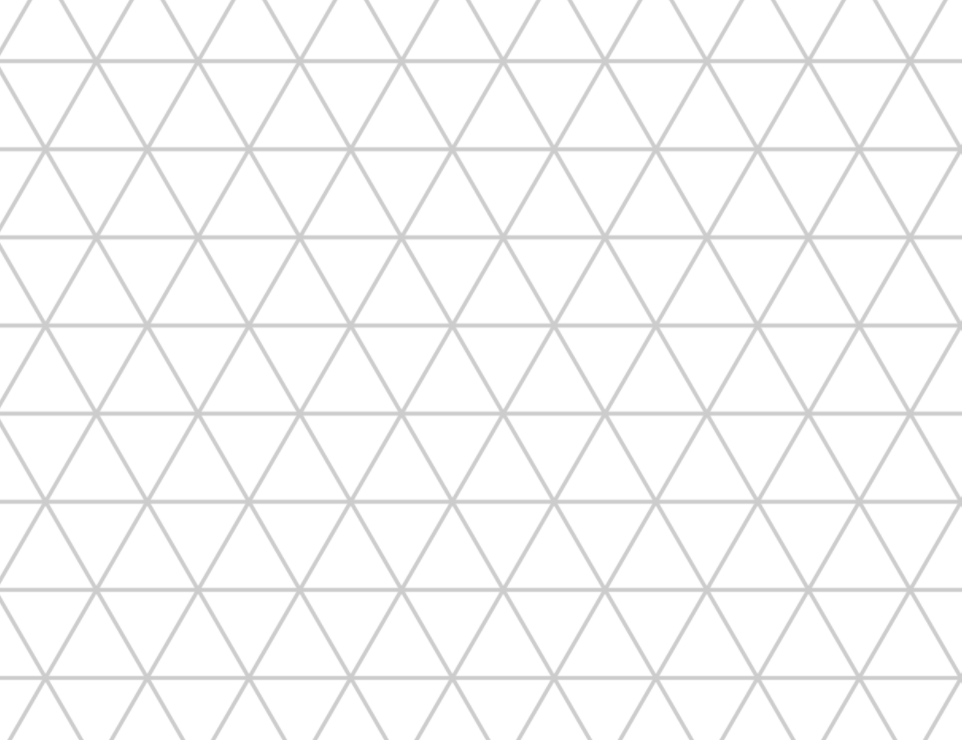}
    	\caption{\centering}
    	\label{fig:2dlattice}
    \end{subfigure} \hfill
    \begin{subfigure}{.47\textwidth}
    	\centering
    	\includegraphics[trim=40 25 30 40, clip, height=5cm]{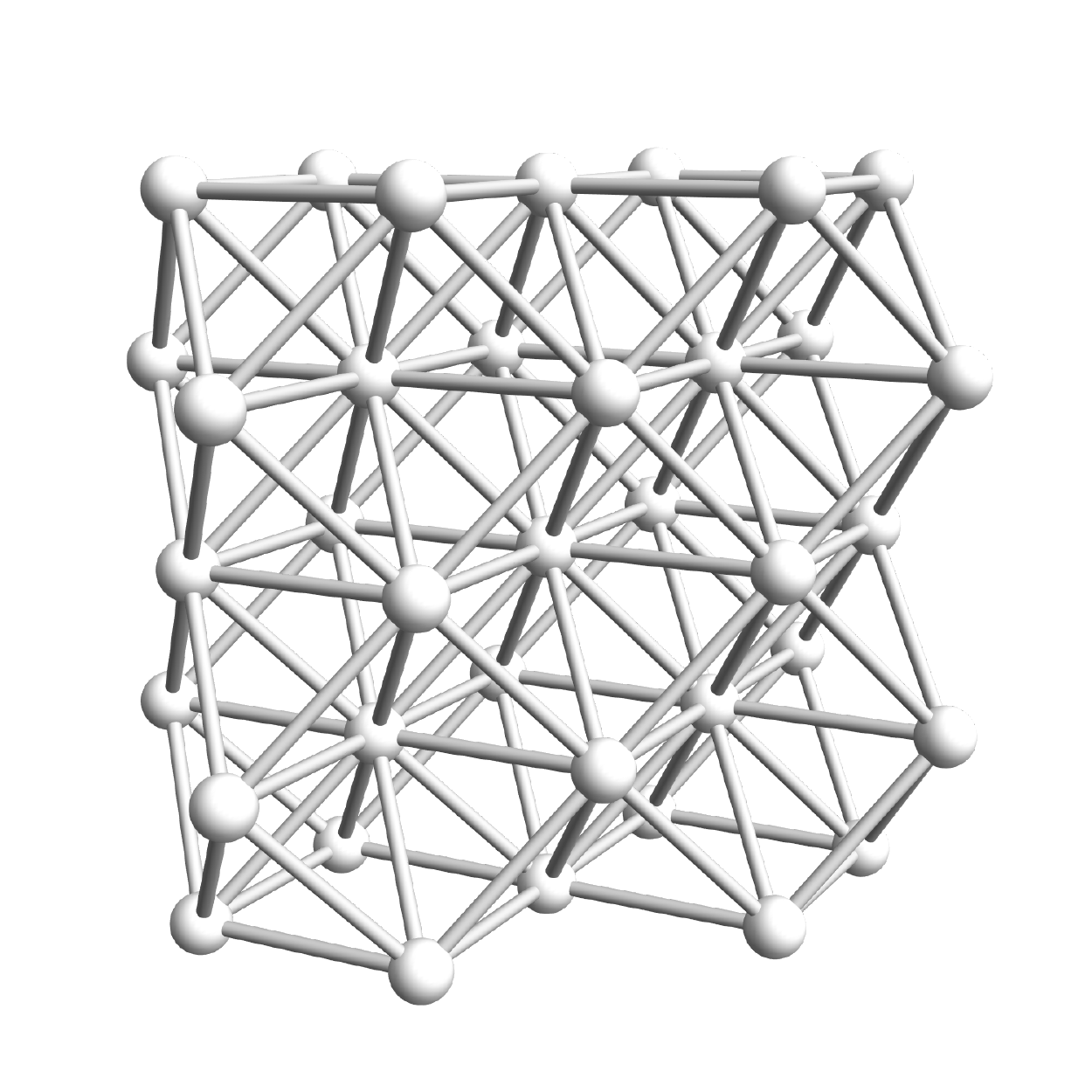}
    	\caption{\centering}
    	\label{fig:3dlattice}
    \end{subfigure}
    \caption{\textit{The Geometric Space Variants.}
    (a) The 2D triangular lattice $\Gtri$.
    (b) The 3D face-centered cubic (FCC) lattice $\Gfcc$.}
    \label{fig:lattices}
\end{figure}

Lattice representations of modular robots and programmable matter typically assume either a cubic lattice corresponding to cubic or spherical modules~\cite{Luo2020-obstaclecrossingstrategy,Luo2022-adaptiveflow,Romanishin2015-3dmblocks,Yamada2022-searchmetamorphic} or an FCC lattice corresponding to (quasi-)spherical or rhombic-dodecahedral modules~\cite{Gastineau2022-leaderelection,Piranda2018-designingquasispherical,Thalamy2021-engineeringefficient}.
The FCC lattice is a natural 3D generalization of the 2D triangular lattice already used for many amoebot algorithms; in fact, it can be decomposed into layers of triangular lattices and thus remains compatible with existing results for 2D amoebot systems.
Also, maintaining module connectivity during movements is easier in an FCC lattice than in a cubic one (see, e.g.,~\cite{Piranda2018-designingquasispherical}).

Using this new 3D geometric space variant, we revisit the classical problem of \textit{leader election}, defined formally in Section~\ref{sec:problem}.
We demonstrate that among the many existing algorithms for leader election in 2D amoebot systems~\cite{Bazzi2019-stationarydeterministic,Daymude2017-improvedleader,Derakhshandeh2015-leaderelection,DiLuna2020-shapeformation,Dufoulon2021-efficientdeterministic,Emek2019-deterministicleader,Gastineau2019-distributedleader}, the \textit{erosion-based algorithm} of Di Luna et al.~\cite{DiLuna2020-shapeformation} extends naturally to 3D---surprisingly, without any cost to runtime.
Although erosion-based election in 3D may seem simple at first glance, its analysis requires new, non-trivial topological arguments specific to the 3D setting.
Our algorithm elects exactly one leader in any connected, contractible amoebot system (defined formally in Section~\ref{sec:3dspace}) within $\bigo{n}$ rounds under an unfair sequential adversary, where $n$ is the number of amoebots in the system.
We thus achieve similar guarantees as the state-of-the-art algorithm for 3D leader election by Gastineau et al.~\cite{Gastineau2022-leaderelection}, with two important differences: (1) we consider all 24 possible amoebot orientations in 3D achievable by rotation or reflection while Gastineau et al.\ only consider eight, and (2) we break symmetry using local comparisons between neighbors' orientations while Gastineau et al.\ assume 2-neighborhood vision.
We further show that our algorithm is compatible with the \textit{concurrency control framework} for amoebot algorithms~\cite{Daymude2021-canonicalamoebot}, implying that it can be transformed into an algorithm with equivalent behavior that remains correct even under an unfair asynchronous adversary.

\subparagraph{Our Contributions.}

Our main contributions are summarized as:
\begin{itemize}
    \item A formalization of the \textit{3D geometric space variant} for the canonical amoebot model using the FCC lattice to discretize space and define amoebots' spatial orientations (Section~\ref{sec:3dspace}).
    
    \item A deterministic amoebot algorithm that solves leader election in both 2D and 3D geometric space for connected, contractible systems under an unfair sequential adversary within $\bigo{n}$ rounds, where $n$ is the number of amoebots in the system (Sections~\ref{sec:alg}--\ref{sec:analysis}).
    
    \item An application of the concurrency control framework for amoebot algorithms~\cite{Daymude2021-canonicalamoebot} that yields the first known amoebot algorithm, in both 2D and 3D space, to solve leader election under an unfair asynchronous adversary (Section~\ref{sec:async}).
\end{itemize}

\section{The Amoebot Model} \label{sec:model}

We begin by describing the features of the canonical amoebot model that will be used in this work; a deeper description of the model and its rationale can be found in~\cite{Daymude2021-canonicalamoebot}.
In the canonical amoebot model, programmable matter consists of individual, homogeneous computational elements called \textit{amoebots}.
The structure of an amoebot system is represented as a subgraph of an infinite, undirected graph $G = (V,E)$ where $V$ represents all relative positions an amoebot can occupy and $E$ represents all atomic movements an amoebot can make.
Each node in $V$ can be occupied by at most one amoebot at a time.
There are many possible assumption variants one could make about space; here, we consider the \textit{2D geometric} variant which assumes $G = \Gtri$, the triangular lattice (\figtext~\ref{fig:2dlattice}), and the presently introduced \textit{3D geometric} variant which assumes $G = \Gfcc$, the face-centered cubic lattice (\figtext~\ref{fig:3dlattice}).

In this work, all amoebots remain \textsc{contracted}, each occupying a single node in $V$; other works also consider \textsc{expanded} amoebots that occupy a pair of adjacent nodes in $V$.
Each amoebot keeps a collection of ports---one for each edge incident to the node it occupies---that are labeled consecutively according to its own local, persistent \textit{orientation}.
An amoebot's orientation is defined according to space variant-specific information; we define orientation for our lattices of interest in Section~\ref{sec:3dspace}.
Two amoebots occupying adjacent nodes are said to be \textit{neighbors}.
Although each amoebot is \textit{anonymous}, lacking a unique identifier, an amoebot can locally identify its neighbors using their port labels.
In particular, amoebots $A$ and $B$ connected via ports $p_A$ and $p_B$ are each assumed to know one another's orientations and labels for $p_A$ and $p_B$.

Each amoebot has memory whose size is a model variant; here we assume \textit{constant-size} memories.
An amoebot's memory consists of two parts: a persistent \textit{public memory} that is only accessible to an amoebot algorithm via communication operations (defined next), and a volatile \textit{private memory} that is directly accessible by amoebot algorithms for temporary variables, private computation, etc.
\textit{Operations} define the programming interface for amoebot algorithms to communicate, move, and control concurrency that are, in reality, implemented via message passing (see~\cite{Daymude2021-canonicalamoebot} for details).
Our algorithm for leader election only makes use of the communication operations \Connected, \Read, and \Write.
\begin{itemize}
    \item The \Connected\ operation tests the presence of neighbors.
    $\Connected(p)$ returns \true\ if and only if there is a neighbor connected via port $p$.
    
    \item The \Read\ and \Write\ operations exchange information in public memory.
    $\Read(p, x)$ issues a request to read the value of a variable $x$ in the public memory of the neighbor connected via port $p$ while $\Write(p, x, x_{val})$ issues a request to update its value to $x_{val}$.
    If $p = \bot$, an amoebot's own public memory is accessed instead of a neighbor's.
\end{itemize}

Amoebot algorithms are defined as sets of \textit{actions}, each of the form $\langle label\rangle : \langle guard\rangle \to \langle operations\rangle$.
An action's \textit{label} specifies its name.
Its \textit{guard} is a Boolean predicate determining whether an amoebot $A$ can execute it based on the ports $A$ has connections on---i.e., which nodes adjacent to $A$ are (un)occupied---and information from the public memories of $A$ and its neighbors.
An action is \textit{enabled} for an amoebot $A$ if its guard is true for $A$, and an amoebot is \textit{enabled} if it has at least one enabled action.
An action's \textit{operations} specify the finite sequence of operations and computation in private memory to perform if this action is executed.

An amoebot is \textit{active} if it is currently executing an action and is \textit{inactive} otherwise.
The model assumes an \textit{adversary} controls the timing of amoebot activations and the resulting action executions, whose \textit{concurrency} and \textit{fairness} are assumption variants.
In this work, we consider two concurrency variants: \textit{sequential}, in which at most one amoebot can be active at a time; and \textit{asynchronous}, in which any set of amoebots can be simultaneously active.
We consider the most general fairness variant: \textit{unfair}, in which the adversary may activate any enabled amoebot.

An amoebot algorithm's time complexity is evaluated in terms of \textit{rounds} representing the time for the slowest continuously enabled amoebot to execute a single action.
Let $t_i$ denote the time at which round $i \in \{0, 1, 2, \ldots\}$ starts, where $t_0 = 0$, and let $\mathcal{E}_i$ denote the set of amoebots that are enabled or already executing an action at time $t_i$.
Round $i$ completes at the earliest time $t_{i+1} > t_i$ by which every amoebot in $\mathcal{E}_i$ either completed an action execution or became disabled at some time in $(t_i, t_{i+1}]$.
Depending on the adversary's concurrency, action executions may span more than one round.

\section{The Three-Dimensional (3D) Geometric Space Variant} \label{sec:3dspace}

\begin{figure}[t]
    \centering
    \begin{subfigure}{.47\textwidth}
    	\centering
    	\includegraphics[width=0.9\textwidth]{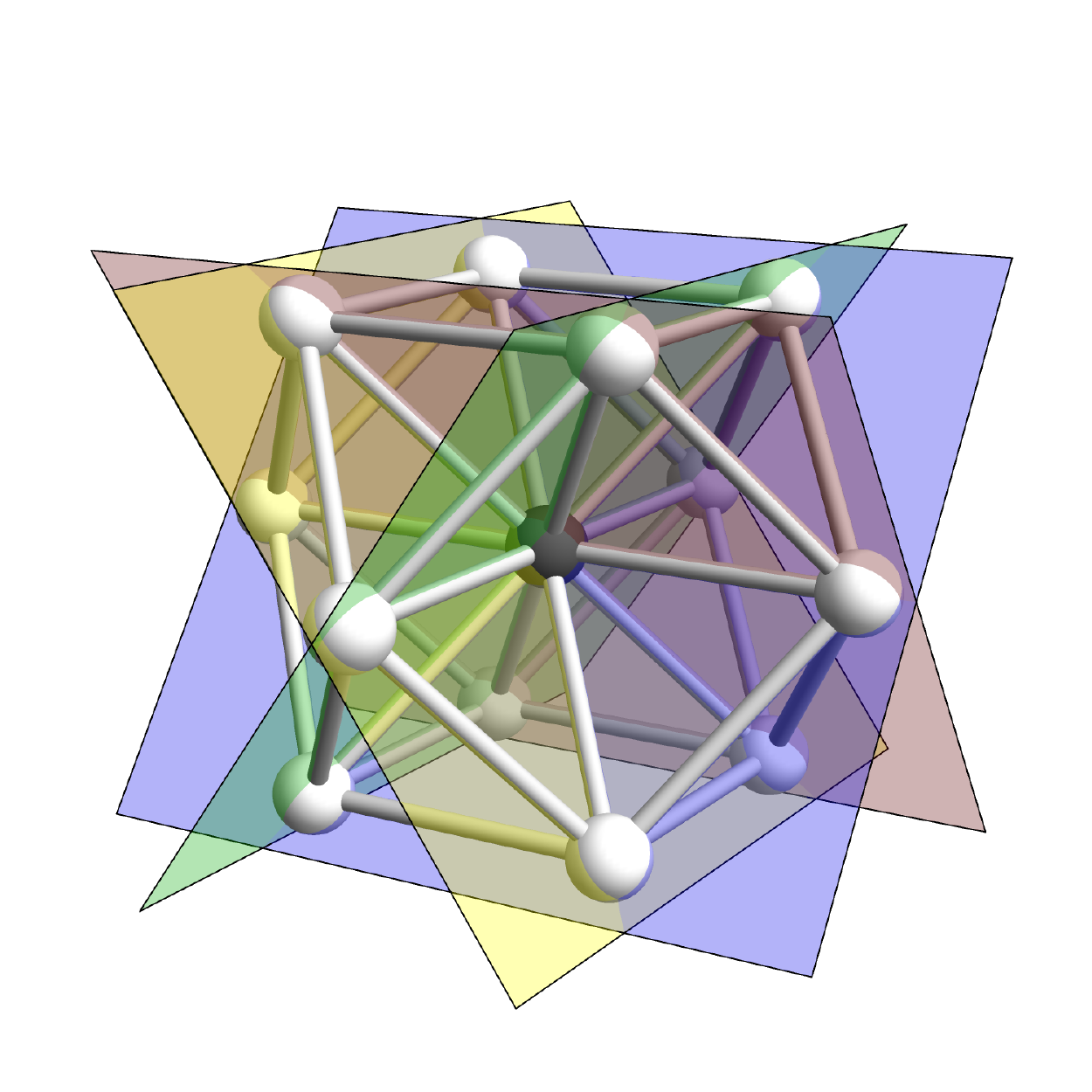}
    	\caption{\centering}
    	\label{fig:intersect:triangular}
    \end{subfigure} \hfill
    \begin{subfigure}{.47\textwidth}
    	\centering
    	\includegraphics[width=0.9\textwidth]{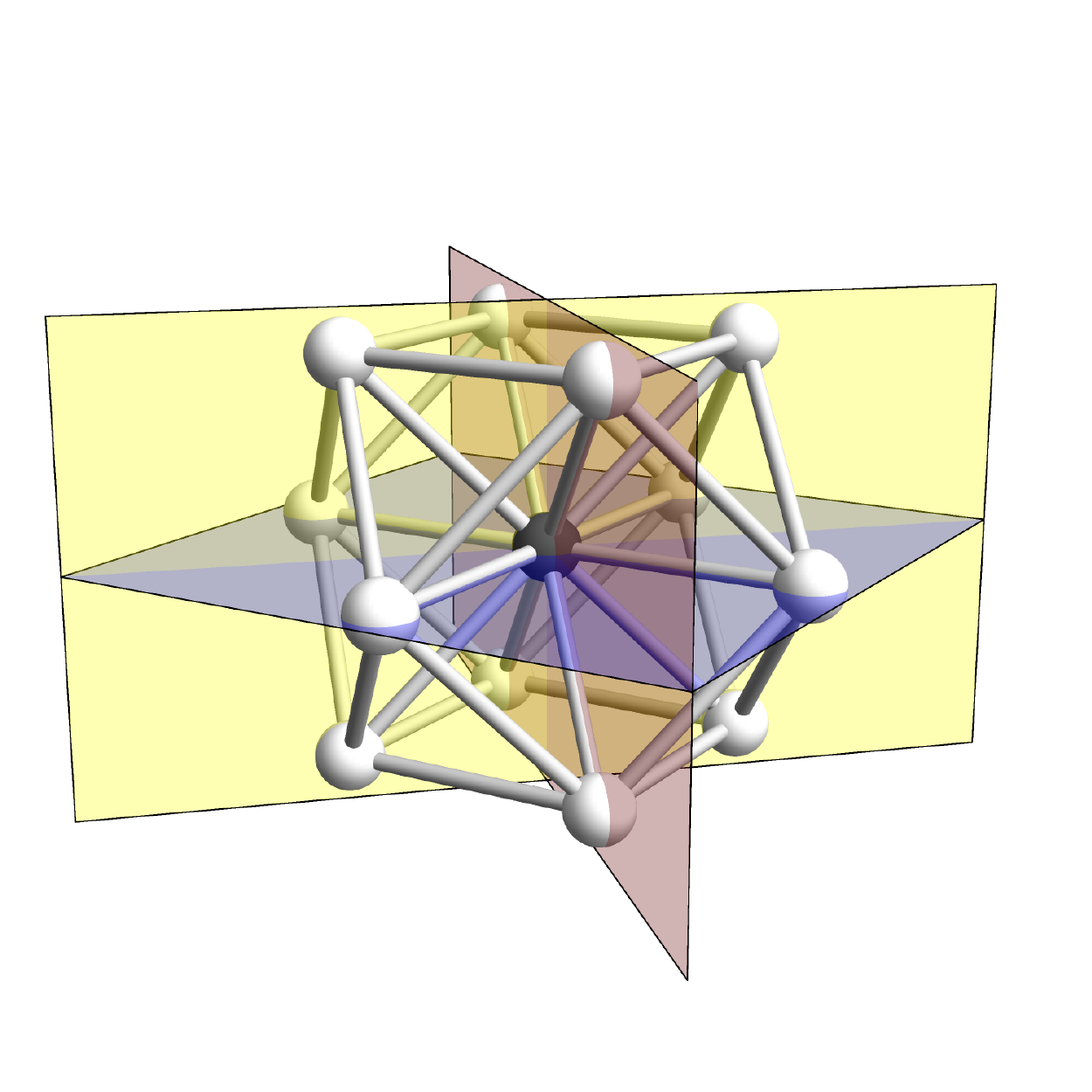}
    	\caption{\centering}
    	\label{fig:intersect:square}
    \end{subfigure}
    \caption{\textit{Lattice Decompositions of $\Gfcc$.}
    A node of $\Gfcc$ (black) viewed as the intersection of (a) four non-parallel triangular lattices or (b) three orthogonal square grids.}
    \label{fig:intersect}
\end{figure}

We now formally define the \textit{3D geometric space variant} for the canonical amoebot model, one of the main contributions of this work.
This variant assumes space is represented by the \textit{face-centered cubic (FCC) lattice}, $\Gfcc$ (\figtext~\ref{fig:3dlattice}).
Just as the triangular lattice $\Gtri$ used by the 2D geometric space variant can be viewed as the adjacency graph of the closest circle packing or as the dual of the hexagonal tiling, $\Gfcc$ can be viewed as the adjacency graph of the sphere packing that minimizes unoccupied volume or as the dual of the rhombic-dodecahedral tessellation.
Each node in $\Gfcc$ has degree 12 and can be viewed as the intersection of four infinite, non-parallel triangular lattices (\figtext~\ref{fig:intersect:triangular}) or as the intersection of three orthogonal square grids (\figtext~\ref{fig:intersect:square}).
Thus, in 3D geometric space, a contracted amoebot has 12 neighbors and an expanded amoebot has at most 18.

An amoebot's \textit{orientation} represents all the ways its local sense of space can be rotated or reflected while respecting the underlying spatial structure.
In $\Gtri$, an amoebot's orientation is represented as a \textit{direction} indicating which incident lattice edge it thinks of as ``north'' and a \textit{chirality} establishing the clockwise vs.\ counterclockwise ordering of its incident edges.
We generalize orientation in $\Gfcc$ using \textit{view}, \textit{spin}, and \textit{rotation} as defined below.
Different model variants may assume that amoebots share all, some, or none of their views, spins, and rotations in common.
This work assumes \textit{assorted} orientations, meaning the amoebots are not guaranteed to share any aspect of their orientations.

The \textit{home lattice} of an amoebot is one of the four infinite triangular lattices that contain the node the amoebot occupies (\figtext~\ref{fig:orientation}a).
An amoebot's \textit{view} is the triangular lattice decomposition of $\Gfcc$ containing the amoebot's home lattice (\figtext~\ref{fig:orientation}b).
A view can alternatively be defined as the set of triangular lattices whose planar embeddings in 3D space are non-intersecting, each orthogonal to the same two anti-parallel vectors.
An amoebot's \textit{spin} defines the ``top'' and ``bottom'' sides of the amoebot's home lattice; formally, it is the differentiation of the home lattice's two orthogonal vectors as positive and negative (\figtext~\ref{fig:orientation}c).
Having fixed this spin vector, any node's incident edges contained in the amoebot's view can be listed in clockwise order according to the right-hand rule.
The final component of an amoebot's orientation is its \textit{rotation} about its spin vector, of which there are three that agree with $\Gfcc$ (\figtext~\ref{fig:orientation}d).
A rotation can alternatively be defined as one of the three axes in $\Gfcc$ that contain the amoebot's node but are not contained in the amoebot's home lattice.\footnote{An amoebot's orientation could be defined more succinctly as a pair of vectors originating at the amoebot's node: one that is orthogonal to the amoebot's home lattice and pointing to its ``top'' side (defining view and spin), and a second that points to a neighboring node outside the amoebot's home lattice (defining rotation).}

\begin{figure}[t]
    \centering
    \includegraphics[width=0.9\textwidth]{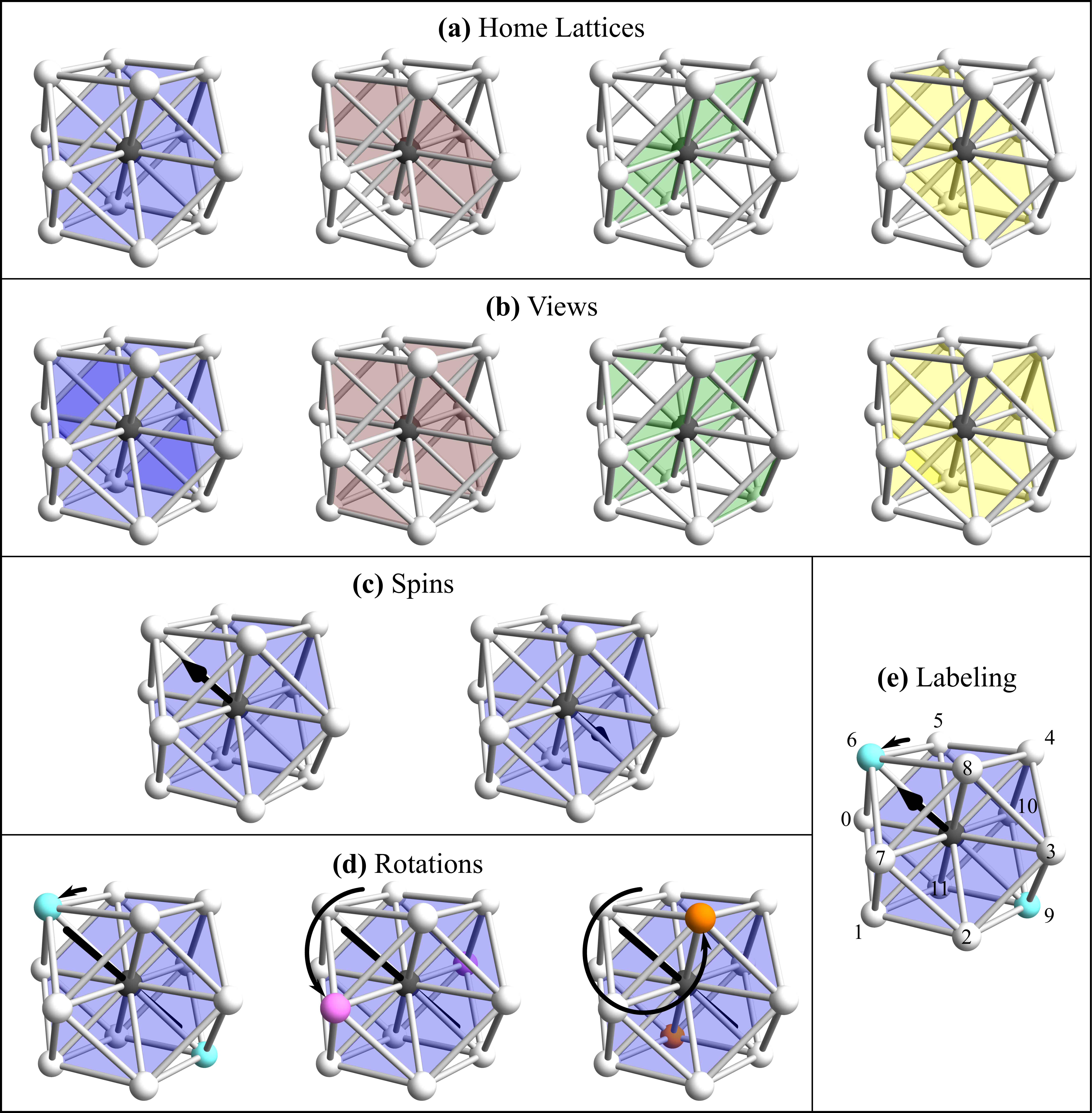}
    \caption{\textit{Amoebot Orientations.}
    (a)--(d) The components of an amoebot's orientation.
    For brevity, only the spins and rotations of the blue home lattice are shown in (c) and (d).
    (e) An example port labeling.}
    \label{fig:orientation}
\end{figure}

As in the 2D geometric space variant, an amoebot uses its orientation in 3D geometric space to define a consistent labeling of its ports (i.e., incident edges).
In \figtext~\ref{fig:orientation}e, we illustrate one possible labeling for a contracted amoebot.
The precise labeling scheme is immaterial so long as all amoebots label their ports consistently as a function of their (potentially differing) orientations.\footnote{This is strictly more general than the port labeling schemes allowed in the formulation of 3D space by Gastineau et al. (see Sections 2.1 and 2.2.1 of~\cite{Gastineau2022-leaderelection}). Their formulation assumes all amoebots share a common sense of ``layers'' analogous to our views, but for a square lattice decomposition of $\Gfcc$ instead of a triangular one.
With square ``home layers'', there are two possible spins (``up'' vs.\ ``down'') and four possible rotations, yielding eight total orientations.
Our formulation considers all 24.}

We can connect the 2D and 3D geometric space variants as follows.
The single triangular lattice $\Gtri$ in 2D geometric space can be thought of as any of the triangular lattices contained in $\Gfcc$; i.e., all amoebots in a 2D system have the same home lattice (and thus the same view) in a 3D system.
An amoebot's chirality in a 2D system plays the same role as its spin in a 3D system, defining ``up'' vs.\ ``down'' and clockwise vs.\ counterclockwise relative to the home lattice.
Finally, an amoebot's direction in a 2D system functions analogously to rotation about its spin vector in a 3D system, with one discrepancy: due to the constraints of the underlying lattices, a 2D system allows six possible $60^\circ$ rotations while a 3D system allows three possible $120^\circ$ rotations.

Finally, we define two spatial properties of amoebot systems that will be used throughout the remaining sections.
An amoebot system is \textit{connected} if the lattice nodes occupied by amoebots induce a single connected component.
We are also concerned with the notion of ``holes'' in amoebot systems.
In general topology, a \textit{hole} is any topological structure that prevents an object (or space) from being continuously shrunk to a single point; an object is \textit{contractible} if it does not contain holes.
To apply this definition to amoebot systems, recall that the dual of $\Gfcc$ (resp., $\Gtri$) is the rhombic dodecahedral (resp., hexagonal) tessellation of space.
Given a 3D (resp., 2D) amoebot system, its \textit{lattice dual representation} is the closed union of all solid rhombic dodecahedrons (resp., hexagons) in the lattice's dual corresponding to lattice nodes occupied by amoebots.
An amoebot system contains a hole if and only if its lattice dual representation contains a (topological) hole and is contractible otherwise.

\section{Amoebot Leader Election} \label{sec:problem}

\begin{table}[tb]
    \centering
    \caption{\textit{Comparison of Amoebot Algorithms for Leader Election.}
    Algorithms are organized chronologically by first publication.
    Gastineau et al.~\cite{Gastineau2022-leaderelection} details two separate leader election algorithms; the row with ``\#2'' is specific to its ``Algorithm 2'' which is marked with a $^*$ to denote its non-standard assumption of 2-neighborhood vision.
    When $k \in \mathbb{Z}_+$ appears in the number of leaders elected, it refers to the amoebot system being $k$-symmetric.
    For the runtime bounds, $L^*$ denotes the length of the longest system boundary, $L$ denotes the length of the system's outer boundary, $D$ is the diameter of the system's initial configuration, and $n$ denotes the number of amoebots in the system.
    The runtime terms $r$ and $mtree$ are specific to~\cite{Gastineau2019-distributedleader}, and it can be shown that $r + mtree$ is both $\Omega(D)$ and $\bigo{n}$.}
    \label{tab:leaderelection}
    \resizebox{\textwidth}{!}{%
    \begin{tabular}{llm{1.9cm}llm{1.1cm}m{1cm}m{1.3cm}l}
        \toprule
        \textbf{Algorithm} & \textbf{Space} & \textbf{Assorted Orientation} & \textbf{Adversary} & \textbf{Det.} &  \textbf{Allows ``Holes''} & \textbf{Statio- nary} & \textbf{Leaders Elected} & \textbf{Runtime} \\
        \midrule
        Derakhshandeh et al.~\cite{Derakhshandeh2015-leaderelection} & 2D & Direction & Strong Seq. & \xmark & \cmark & \cmark & 1 & $\bigo{L^*}$ exp. \\
        Daymude et al.~\cite{Daymude2017-improvedleader} & 2D & Direction & Strong Seq. & \xmark & \cmark & \cmark & 1, w.h.p. & $\bigo{L}$ w.h.p. \\
        Di Luna et al.~\cite{DiLuna2020-shapeformation} & 2D & Both & Sync. & \cmark & \xmark & \cmark & $k \leq 3$ & $\bigo{n^2}$ \\
        Gastineau et al.~\cite{Gastineau2019-distributedleader} & 2D & Direction & Strong Seq. & \cmark & \xmark & \cmark & 1 & $\bigo{r + mtree}$ \\
        Emek et al.~\cite{Emek2019-deterministicleader} & 2D & Both & Strong Seq. & \cmark & \cmark & \xmark & 1 & $\bigo{Ln^2}$ \\
        Bazzi and Briones~\cite{Bazzi2019-stationarydeterministic} & 2D & Direction & Weak Seq. & \cmark & \cmark & \cmark & $k \leq 6$ & $\bigo{n^2}$ \\
        Dufoulon et al.~\cite{Dufoulon2021-efficientdeterministic} & 2D & Direction & Strong Seq. & \cmark & \cmark & \xmark & 1 & $\bigo{L + D}$ \\
        Gastineau et al.\ \#2$^*$~\cite{Gastineau2022-leaderelection} & 3D & Spin/Rotation & Strong Seq. & \cmark & \xmark & \cmark & 1 & $\bigo{n}$ \\
        \textbf{This Paper} & 2\&3D & All & Strong Seq. & \cmark & \xmark & \cmark & 1 & $\bigo{n}$ \\
        \textbf{This Paper} + \cite{Daymude2021-canonicalamoebot} & 2\&3D & All & Async. & \cmark & \xmark & \cmark & 1 & \textbf{?} \\
        \bottomrule
    \end{tabular}}
\end{table}

An algorithm \textit{solves the leader election problem} if for any connected system of initially contracted amoebots with well-initialized memories, eventually a single amoebot irreversibly declares itself the system's leader and no other amoebot ever does so.
A leader's ability to break symmetry and coordinate the system via broadcasts makes it a powerful primitive for other amoebot algorithms, spurring a flurry of recent research on amoebot algorithms for leader election~\cite{Bazzi2019-stationarydeterministic,DAngelo2020-asynchronoussilent,Daymude2017-improvedleader,Derakhshandeh2015-leaderelection,DiLuna2020-shapeformation,Dufoulon2021-efficientdeterministic,Emek2019-deterministicleader,Gastineau2019-distributedleader,Gastineau2022-leaderelection}.
Table~\ref{tab:leaderelection} compares these existing algorithms, their assumptions, and their outcomes to our own, though we treat the D'Angelo et al.\ algorithm~\cite{DAngelo2020-asynchronoussilent} and ``Algorithm 1'' of Gastineau et al.~\cite{Gastineau2022-leaderelection} separately due to their strong, one-off assumptions.
The common assumptions and outcomes are:
\begin{itemize}
    \item \textit{Space.} Nearly all amoebot algorithms for leader election assume the 2D geometric space variant.
    Recently, Gastineau et al.~\cite{Gastineau2022-leaderelection} introduced a pair of algorithms for amoebot leader election on $\Gfcc$, much like our own 3D geometric space variant, but with stronger assumptions (see below).
    
    \item \textit{Orientation.} Recall that in 2D, amoebot orientation is defined as a direction and chirality; analogously, 3D orientations are a view, spin, and rotation.
    In 2D, the algorithms of Di Luna et al.~\cite{DiLuna2020-shapeformation} and Emek et al.~\cite{Emek2019-deterministicleader} allow fully assorted orientations while the rest assumed common chirality.
    In 3D, ``Algorithm 2'' of Gastineau et al.~\cite{Gastineau2022-leaderelection} assumes common (square) views while our algorithm allows fully assorted orientations.
    
    \item \textit{Adversary/Scheduler.} Until the canonical amoebot model was introduced, most algorithms assumed a sequential scheduler under which at most one amoebot could be active at a time.
    Strong sequential schedulers allow each amoebot to read, write, and move in one atomic action.
    Weak sequential schedulers used by Bazzi and Briones~\cite{Bazzi2019-stationarydeterministic} are more general, ensuring only that reading, writing, and moving are each individually atomic but may not necessarily be combined into a single atomic action.
    Di Luna et al.~\cite{DiLuna2020-shapeformation} assume a synchronous scheduler that may activate arbitrary subsets of amoebots concurrently, but only in discrete stages.
    This paper starts with a strong sequential scheduler but ultimately considers the most general asynchronous adversary that allows for arbitrary concurrency among amoebot actions.
    
    \item \textit{Deterministic vs.\ Randomized.} Randomization is a classical technique for symmetry breaking, but incurs a failure probability (with respect to correctness, runtime, or both) that is not present in deterministic algorithms.
    The original Derakhshandeh et al.\ algorithm~\cite{Derakhshandeh2015-leaderelection} and its improvement by Daymude et al.~\cite{Daymude2017-improvedleader} are both randomized while the rest, including our present algorithm, are deterministic.
    
    \item \textit{Connectivity and Holes.} All existing algorithms assume connected initial system configurations.
    The algorithms of Di Luna et al.~\cite{DiLuna2020-shapeformation} and Gastineau et al.~\cite{Gastineau2019-distributedleader} further assume their 2D initial configurations are hole-free.
    Our algorithm analogously assumes its 3D initial configurations are contractible (as defined in Section~\ref{sec:3dspace}), which is a necessary but likely insufficient condition for the ``electable'' configurations assumed by ``Algorithm 2'' of Gastineau et al.~\cite{Gastineau2022-leaderelection}.
    
    \item \textit{Movement.} Emek et al.~\cite{Emek2019-deterministicleader} and Dufoulon et al.~\cite{Dufoulon2021-efficientdeterministic} utilize amoebots' movement capabilities as a mechanism for symmetry breaking.
    All other algorithms, including ours, are \textit{stationary}, relying only on communication to elect a leader.
    
    \item \textit{Number of Leaders Elected.} Due to symmetry in the initial system configuration, some of the stationary deterministic algorithms elect a constant number of leaders instead of a unique one.
    In particular, for $k$-symmetric system configurations, the Di Luna et al.\ algorithm~\cite{DiLuna2020-shapeformation} elects $k \in \{1, 2, 3\}$ leaders and the Bazzi and Briones algorithm~\cite{Bazzi2019-stationarydeterministic} elects $k \in \{1, 2, 3, 6\}$.
    All other algorithms, including ours, elect a unique leader.
    
    \item \textit{Runtime.} All existing algorithms' runtime bounds are given in sequential rounds, where a round ends when each amoebot has been activated at least once.
    Our analysis uses a comparable but more technical definition of a round that extends to any concurrency assumption and focuses on the behavior of enabled amoebots (see Section~\ref{sec:model}).
    Among the deterministic algorithms, only the Dufoulon et al.\ algorithm~\cite{Dufoulon2021-efficientdeterministic} achieves a faster runtime than ours.
\end{itemize}

D'Angelo et al.~\cite{DAngelo2020-asynchronoussilent} introduced the SILBOT model which is inspired by the amoebot model and aims to study what collective behaviors are possible when robots cannot communicate via messages or memory accesses.
Under this seemingly challenging model, they show how to deterministically elect up to three leaders (due to symmetry) if the 2D system configuration is connected and hole-free---and if the robots can sense the presence and shape of robots in their 2-neighborhood on $\Gtri$, a strong assumption that no other amoebot algorithm makes.
They then additionally show how the algorithm can be generalized to connected configurations with holes if each robot can sense which unoccupied nodes are holes and which belong to the outside of the system.
This very strong assumption is unique to SILBOT, resolving the challenging ``inner-outer boundary problem''~\cite{Daymude2017-improvedleader,Derakhshandeh2015-leaderelection} by simply granting the requisite knowledge to robots a priori.
Our algorithm makes no such assumptions.

The pair of Gastineau et al.\ algorithms for leader election in 3D amoebot systems are the most relevant to ours since they also use $\Gfcc$ to discretize space~\cite{Gastineau2022-leaderelection}.
However, their first algorithm assumes all amoebots have the same 3D orientation and $\bigo{n\log n}$ memory, where $n$ is the number of amoebots in the system.
Their second algorithm addresses some of these concerns by assuming orientations with assorted spins and rotations but common views and constant-size memory, but allows amoebots to view ``extended neighborhoods'' that include nodes at distance 2.
Our algorithm achieves the same leader election guarantees without these assumptions, using only constant-size memory and local comparisons between 1-neighbors' orientations for symmetry breaking.

\section{3D Leader Election by Erosion} \label{sec:alg}

Our algorithm for leader election in 3D amoebot systems is an extension of the ``lattice consumption'' algorithm for 2D systems by Di Luna et al.~\cite{DiLuna2020-shapeformation}.
In the lattice consumption algorithm, all amoebots are initially eligible for leader candidacy.
When activated, an eligible amoebot uses certain rules regarding the number and relative positions of its eligible neighbors to decide whether to \textit{erode}, revoking its candidacy without disconnecting the set of eligible amoebots or introducing a hole.
Assuming the initial configuration was connected and hole-free, Di Luna et al.~\cite{DiLuna2020-shapeformation} proved that under a synchronous scheduler, erosion would eventually reduce the system to 1, 2, or 3 candidate leaders depending on the system's symmetry.

Our algorithm generalizes this approach to the 3D geometric space variant by defining rules for erosion based on 3D neighborhoods.
It deterministically elects a unique leader for connected, contractible systems by leveraging the sequential adversary to break symmetry (Section~\ref{sec:analysis}).
We then lift this strong timing assumption to the asynchronous setting---the most general of all concurrency assumptions---using the lock-based concurrency control framework for amoebot algorithms~\cite{Daymude2021-canonicalamoebot} (Section~\ref{sec:async}).

\begin{algorithm}[tb]
    \caption{Leader Election by Erosion for Amoebot $A$} \label{alg:erosion}
    \begin{algorithmic}[1]
        \State $\textsc{Setup} : (A.\candidate = \nil) \to$
        \Indent
            \State \Write$(\bot, \candidate, \true)$.  \Comment{Become a candidate.}
            \For {each port $p$ of $A$}  \Comment{Inform neighbors of candidacy.}
                \If {\Connected$(p)$}
                    \State Let $p'$ be the neighbor's port connected to port $p$.
                    \State \Write$(p, \nbrcand(p'), \true)$.
                \EndIf
            \EndFor
        \EndIndent
        \State $\textsc{Erode} : (A.\candidate = \true) \wedge (\forall B \in N(A), B.\candidate \neq \nil) \wedge \Call{CanErode}{ } \to$
        \Indent
            \State $\Write(\bot, \candidate, \false)$.  \Comment{Revoke candidancy.}
            \For {each port $p$ of $A$}  \Comment{Inform neighbors of erosion.}
                \If {\Connected$(p)$}
                    \State Let $p'$ be the neighbor's port connected to port $p$.
                    \State \Write$(p, \nbrcand(p'), \false)$.
                \EndIf
            \EndFor
        \EndIndent
        \State $\textsc{DeclareLeader} : (A.\candidate = \true) \wedge (\forall B \in N(A), B.\candidate = \false) \to$
        \Indent
            \State $\Write(\bot, \leader, \true)$.
        \EndIndent
        \Function{CanErode}{ }
            \State \Return \true\ if and only if:
            \begin{itemize}[leftmargin=14mm]
                \item Rule~\ref{rule:one}: $A$ has exactly one candidate neighbor; or
                
                \item Rule~\ref{rule:twofiveconnected}: $A$ has two to five candidate neighbors and these neighbors' positions induce a connected subgraph; or
                
                \item Rule~\ref{rule:square}: $A$ has exactly two candidate neighbors that have a common candidate neighbor such that these four candidates induce a square in $\Gfcc$.
            \end{itemize}
        \EndFunction
    \end{algorithmic}
\end{algorithm}

Algorithm~\ref{alg:erosion} details our algorithm's pseudocode and Table~\ref{tab:variables} summarizes each amoebot's local variables.
Initially, the system has no leader and all amoebots exist in a special null candidacy state.
Every amoebot's first activation executes the \textsc{Setup} action which sets the amoebot as a candidate and informs its neighbors of its candidacy.
Once all neighbors of a given candidate amoebot $A$ have also done their setup actions, the \textsc{Erode} action becomes enabled for $A$ whenever $A$ satisfies an erosion rule.
When executing \textsc{Erode}, $A$ revokes its candidacy and informs its neighbors that it did so.
As we will prove in the next section, repeated executions of the \textsc{Erode} action eventually reduce the system to a single candidate that, upon finding no candidate neighbors, elects itself as leader in the \textsc{DeclareLeader} action.

\begin{table}[tb]
    \centering
    \caption{\textit{Algorithm Notation.}
    The domain, initialization, and description of the local variables used in the leader election algorithm by an amoebot $A$.}
    \label{tab:variables}
    \resizebox{\columnwidth}{!}{%
    \begin{tabular}{llll}
        \toprule
        \textbf{Variable} & \textbf{Domain} & \textbf{Init.} & \textbf{Description} \\
        \midrule
        \leader & $\{\true, \false\}$ & \false & \true\ iff $A$ is the unique leader \\
        \candidate & $\{\nil, \true, \false\}$ & \nil & After first activation, \true\ iff $A$ is a candidate \\
        \nbrcand$(p)$ & $\{\true, \false\}$ & \false & \true\ iff $A$ has a candidate neighbor on port $p$ \\
        \bottomrule
    \end{tabular}}
\end{table}

It remains to specify the erosion rules for 3D geometric systems.
We formally specify these rules below and visualize them in \figtext~\ref{fig:erosion}.
For the sake of clarity, we represent the collection of these rules in Algorithm~\ref{alg:erosion} as a function \textsc{CanErode} that returns \true\ if and only if the calling amoebot $A$ satisfies one of the following erosion rules:
\begin{enumerate}[label=\textbf{Rule \arabic*.}, ref=\arabic*, left=0mm]
    \item \label{rule:one} $A$ has exactly one candidate neighbor (\figtext~\ref{fig:erosion}a).
    
    \item \label{rule:twofiveconnected} $A$ has two to five candidate neighbors, and these neighbors' positions induce a connected subgraph (\figtext~\ref{fig:erosion}b).
    
    \item \label{rule:square} $A$ has exactly two candidate neighbors that have a common candidate neighbor such that these four candidates induce a square in $\Gfcc$ (\figtext~\ref{fig:erosion}c).
\end{enumerate}

As we will show in Section~\ref{sec:analysis}, any connected and contractible system of at least two candidate amoebots contains at least one candidate $A$ satisfying one of these three rules.
We can further show that the ``erosion'' of $A$ does not violate the connectivity or contractibility of the remaining candidate structure, ensuring that the system eventually converges to exactly one leader amoebot.

Rules~\ref{rule:one} and~\ref{rule:twofiveconnected} can be evaluated using only the local port labels of $A$ that are connected to candidate neighbors---which can be obtained using \Read\ operations on neighbors' \candidate\ variables---and some basic information about the structure of $\Gfcc$.
However, Rule~\ref{rule:square} covers a special case that is specific to $\Gfcc$.
If four candidates $A$, $B$, $C$, and $D$ form a square in which each candidate is adjacent to exactly two others\footnote{Adjacency in $\Gfcc$ is taken by (2D) face adjacency; note that amoebots $A,B,C$ and $D$ actually do intersect in the rhombic dodecahedral tessellation at exactly one vertex point.} but not to the third that is ``catty-corner'' to it, then none of these four candidates can determine whether erosion will disconnect the candidate structure when using only the positions of their candidate neighbors.
For $A$, safe erosion hinges on the existence of the catty-corner candidate $C$: if $A$ erodes, then $B$ and $D$ remain connected if and only if $C$ exists.

\begin{figure}[t]
    \centering
    \includegraphics[width=0.8\textwidth]{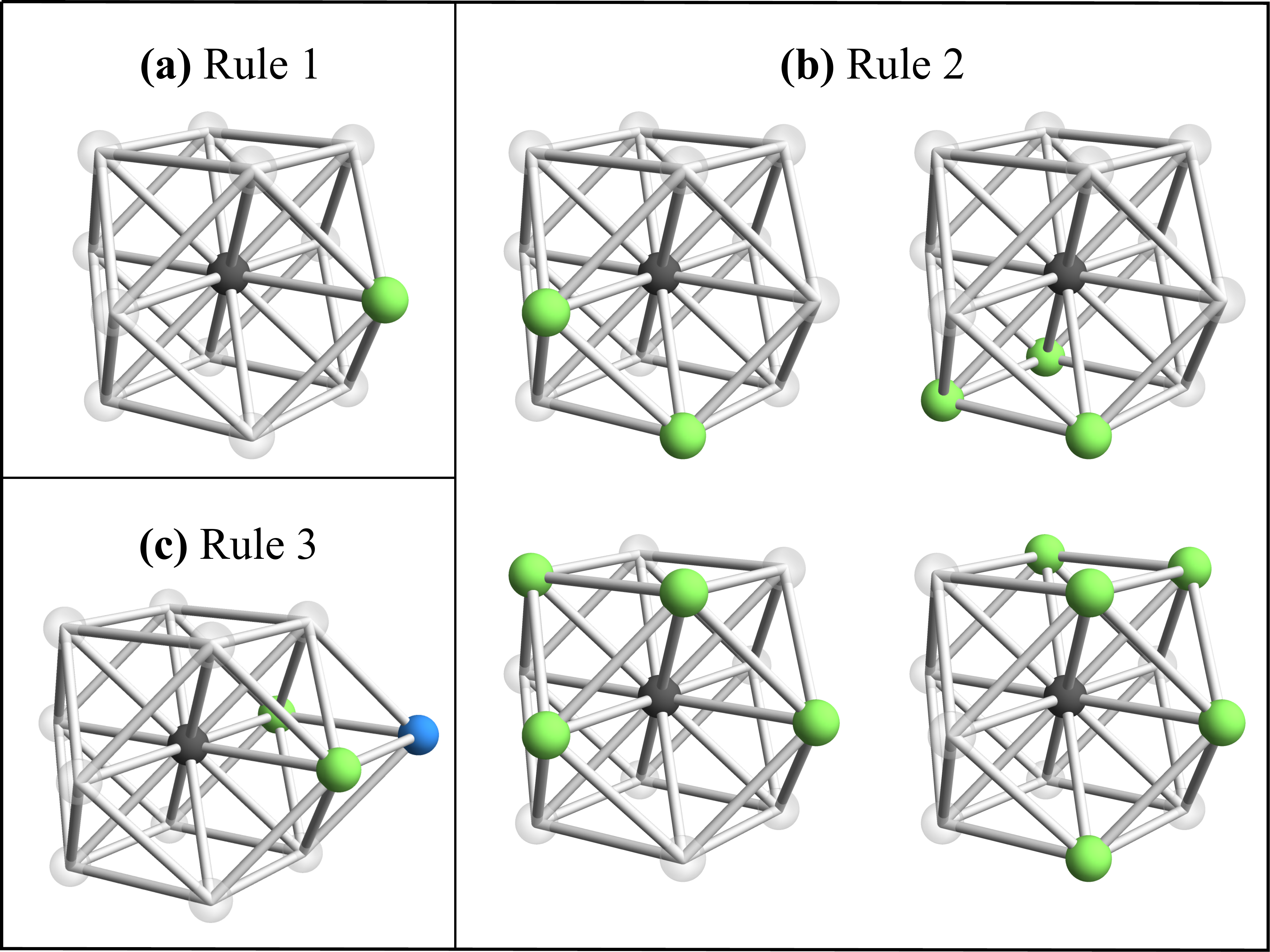}
    \caption{\textit{Erosion Rules.}
    Example configurations of neighboring candidates (green) for which an amoebot $A$ (black) can safely erode.
    (a)--(b) $A$ can erode if it has one to five neighbors that induce a connected subgraph.
    (c) $A$ can only erode if the catty-corner candidate neighbor (blue) exists.}
    \label{fig:erosion}
\end{figure}

Instead of assuming this problem away by giving the amoebots 2-neighborhood vision, as ``Algorithm 2'' of Gastineau et al.~\cite{Gastineau2022-leaderelection} does, we address this problem locally using the \nbrcand\ variables.
An amoebot $A$ has $A.\nbrcand(p) = \true$ if and only if $A$ has a neighbor candidate connected via port $p$; these are set to \true\ by neighbors becoming candidates during their \textsc{Setup} actions and then are reset to \false\ by eroding candidate neighbors during their \textsc{Erode} actions.
Maintaining these variables allows an amoebot $A$ that might be in a square to \Read\ the corresponding \nbrcand\ variable of its candidate neighbors $B$ or $D$ to check for the existence of the catty-corner candidate $C$.
Importantly, this uses the amoebot model's assumption that neighboring amoebots know one another's orientations, and thus $A$ can translate the adjacency it wants to check into the correct port label from the perspective of $B$ or $D$.

\section{Sequential Analysis} \label{sec:analysis}

We first analyze our algorithm under an unfair sequential adversary; i.e., when at most one amoebot can be active per time and the adversary may activate any enabled amoebot.
We will extend these results to an unfair asynchronous adversary allowing for arbitrary concurrency in the next section.
We first prove our algorithm never violates a safety condition.

\begin{lemma} \label{lem:safety}
    Let $S$ be a connected and contractible amoebot system on $\Gfcc$.
    If amoebot $A$ satisfies an erosion rule with respect to $S$, then $S - A$ is also connected and contractible.
\end{lemma}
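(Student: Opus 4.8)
The plan is to prove the two conclusions -- connectivity and contractibility of $S-A$ -- separately, reducing the second to a purely local statement about the boundary of the dual cell of $A$.

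\emph{Connectivity.} First I would show that the candidate neighbors $N(A)\cap S$ all lie in a single connected component of $S-A$: under Rules~\ref{rule:one} and~\ref{rule:twofiveconnected} their positions already induce a connected subgraph of $\Gfcc$, and under Rule~\ref{rule:square} the two candidate neighbors $B$ and $D$ are joined via the catty-corner candidate $C\in S-A$. Since $S$ is connected, every amoebot of $S-A$ is connected in $S$ to $A$ by a path whose last edge ends at a vertex of $N(A)\cap S$; deleting that edge yields a path inside $S-A$, so every amoebot of $S-A$ lies in that same component, and hence $S-A$ is connected. This step uses only connectivity of $S$; the delicate case is Rule~\ref{rule:square}, where omitting $C$ would disconnect $B$ from $D$, which is precisely why the rule tests for $C$.

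\emph{Reducing contractibility to a local claim.} Let $R$ and $R'$ be the lattice dual representations of $S$ and $S-A$, let $P_A$ be the rhombic dodecahedron dual to $A$ (convex, hence contractible), and set $I := R'\cap P_A$, a subcomplex of $\partial P_A\cong S^2$. Since $\Gfcc$-adjacency is exactly facet-adjacency of dual cells, $I$ is the union of the rhombic faces $f_B := P_A\cap P_B$ over $B\in N(A)\cap S$, together with any lower-dimensional cells coming from catty-corner candidates -- and connectivity of $S-A$ forces the latter to already lie on those faces. We have $R=R'\cup P_A$ with $R'\cap P_A=I$, and $R$ is contractible because $S$ is. Hence, once $I$ is shown to be contractible, $R'$ is contractible -- either by a Mayer--Vietoris and van Kampen argument, or concretely by deformation-retracting $P_A$ onto $I$ and thereby $R$ onto $R'$.

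\emph{The main obstacle.} The whole argument thus reduces to showing that the union of the dual faces of $A$'s candidate neighbors is a contractible subcomplex of $\partial P_A$, equivalently that it does not separate the sphere. This is the hard, genuinely 3D part, and I would prove it by a case analysis on the three rules using the combinatorics of the rhombic dodecahedron (its face-adjacency graph is the cuboctahedron; four faces meet at each degree-four vertex and three at each of the others; and one must track which pairs of faces share an edge versus only a vertex). Rule~\ref{rule:one} gives a single face, a disk; Rule~\ref{rule:square} gives $f_B$ and $f_D$ meeting only at the degree-four vertex common to $A,B,C,D$, a wedge of two disks, still contractible; and Rule~\ref{rule:twofiveconnected} requires showing that no connected family of at most five candidate faces wraps around $P_A$ -- this is exactly where contractibility of $S$ must be invoked, since such a ``wrap-around'' arrangement of $A$'s candidate neighbors would leave a hole in $R'$. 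The analogous 2D statement is much easier: on the hexagonal boundary circle, any connected union of at most five edges is a proper sub-arc, hence contractible.
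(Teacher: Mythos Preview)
Your connectivity argument is essentially the same as the paper's. For contractibility you take a genuinely different route: the paper argues informally that any hole in $S-A$ ``must correspond to exactly the position of $A$'' and then discharges the local claim by an exhaustive Mathematica enumeration over all $\le 5$-neighbor configurations, whereas you localize rigorously via Mayer--Vietoris to the contractibility of $I = R'\cap P_A \subset \partial P_A$. That reduction is cleaner and worth keeping. But two steps in your execution do not hold up as written.

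First, the claim that ``connectivity of $S-A$ forces'' the catty-corner vertices to lie on the faces $f_B$ is false. Take $A=(0,0,0)$ with candidate neighbors $B=(1,1,0)$, $D=(1,-1,0)$ and catty-corner $C=(2,0,0)$ (so Rule~\ref{rule:square} applies), and add $E=(0,0,2)$ connected to $B$ through $(1,1,2)$ and $(2,1,1)$, neither of which neighbors $A$. Then $S-A$ is connected, yet $P_E\cap P_A$ is the single octahedral vertex $(0,0,1)$, which lies on none of $f_B,f_D$, so $I$ is disconnected. What actually rules this out is the \emph{contractibility} of $S$: from Mayer--Vietoris, $\tilde H_0(I)\cong\tilde H_0(R')$, and you already proved $\tilde H_0(R')=0$. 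So the conclusion is right but the justification must invoke contractibility of $S$, not just connectivity of $S-A$.

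Second, and more seriously, your handling of Rule~\ref{rule:twofiveconnected} is circular. You say the wrap-around case ``is exactly where contractibility of $S$ must be invoked, since such a wrap-around \ldots\ would leave a hole in $R'$.'' But $R'$ having no hole is precisely the conclusion you are proving; and if you meant $R$ rather than $R'$, that does not work either, since the hexagonal band of six neighbors around $A$ gives $R$ contractible while $I$ is an annulus. What is actually needed---and what the paper's computer search establishes in its own formulation---is the purely combinatorial fact about the rhombic dodecahedron that no edge-connected family of at most five faces separates $\partial P_A$. This cannot be extracted from the hypotheses on $S$; you must prove it directly from the combinatorics of the cuboctahedral face-adjacency graph, and your proposal does not do so.
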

\begin{proof}
    Suppose to the contrary that there exists an amoebot $A$ in $S$ that satisfies an erosion rule but $S - A$ is either disconnected or not contractible.
    Suppose first that $S - A$ is disconnected, i.e., there exist amoebots $\{B, C\} \not\ni A$ such that a simple path from $B$ to $C$ exists in the subgraph induced by $S$ but not in the subgraph induced by $S - A$.
    Any $(B,C)$-path $\mathcal{P}$ in $S$ must necessarily must have contained $A$ and, by extension, two distinct neighbors $N_1$ and $N_2$ of $A$.
    Thus, $A$ could not have eroded because it had exactly one neighbor in $S$ (Rule~\ref{rule:one}).
    Moreover, $A$ could not have eroded because it had between two and five neighbors in $S$ that themselves induced a connected subgraph (Rule~\ref{rule:twofiveconnected}) since the path $\mathcal{P} = (B, \ldots, N_1, A, N_2, \ldots, C)$ could be augmented using the connectivity of $N_1$ and $N_2$ as $(B, \ldots, N_1, \ldots, N_2, \ldots, C)$, a $(B,C)$-path in $S - A$.
    So $A$ must have eroded as a result of Rule~\ref{rule:square}; w.l.o.g., suppose the clockwise order of amoebots in the square is $(A, N_1, X, N_2)$.
    But $A$ could only have satisfied Rule~\ref{rule:square} if its catty-corner amoebot $X$ existed in $S$, implying the path $(B, \ldots, N_1, X, N_2, \ldots, C)$ exists in $S - A$.
    Therefore, in all cases, $S - A$ must be connected.
    
    So suppose instead that $S - A$ is connected but is not contractible; i.e., $S - A$ contains a hole.
    Since $S$ is contractible by supposition, the hole in $S - A$ must correspond to exactly the position of $A$; i.e., the lattice dual representation replacing the neighbors of $A$ with rhombic dodecahedrons must contain a topological hole.
    One can prove using exhaustive search that no configuration of at most five neighbors has a corresponding closed union of rhombic dodecahedrons containing a topological hole.\footnote{We verified this statement using a simple Mathematica program that enumerates each neighborhood polyhedron of at most five neighboring rhombic dodecahedrons and verifies that its genus is zero; see the Supplemental Material link at the start of this paper.}
    Since all erosion rules involve $A$ having between one and five neighbors in $S$, it is impossible for $A$ to both satisfy an erosion rule and create a hole by eroding, a contradiction.
\end{proof}

We next leverage this safety condition to prove progress.

\begin{lemma} \label{lem:progress}
    Any connected and contractible amoebot system $S$ on $\Gfcc$ comprising at least two amoebots contains an amoebot that satisfies an erosion rule with respect to $S$.
\end{lemma}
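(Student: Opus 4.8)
The plan is to argue by contradiction: assume a connected, contractible system $S$ with $|S| \geq 2$ in which \emph{no} amoebot satisfies any of Rules~\ref{rule:one}--\ref{rule:square}, and derive a topological impossibility. The starting point is that every amoebot must have at least six candidate neighbors, since having exactly one triggers Rule~\ref{rule:one}, and having two through five while inducing a connected neighborhood subgraph triggers Rule~\ref{rule:twofiveconnected}. So the only way a low-degree amoebot survives is if its two-to-five neighbors form a \emph{disconnected} subgraph---and in the delicate case of exactly two non-adjacent neighbors, Rule~\ref{rule:square} must also fail, meaning no catty-corner candidate completes a square. I would first catalogue precisely which neighborhood configurations of $1 \le d \le 5$ neighbors escape all three rules; by the degree-12 structure of $\Gfcc$ and a finite case analysis (mirroring the Mathematica enumeration cited in Lemma~\ref{lem:safety}), the surviving low-degree configurations are exactly those where the neighbors split into two or more ``clusters'' that are mutually non-adjacent and that cannot be bridged by a common square-completing candidate.

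The second step is a global extremal argument. Consider a lattice direction---say one of the coordinate axes in a square-grid decomposition of $\Gfcc$ (\figtext~\ref{fig:intersect:square})---and let $A$ be an amoebot that is extremal in $S$ along that direction (maximal coordinate, breaking ties by a secondary axis, then a tertiary one). Because $A$ is extremal, roughly half of its twelve potential neighbor positions are unoccupied: all the nodes strictly ``above'' $A$ in the chosen direction are empty. The key geometric claim is that the remaining (at most six) possible neighbor positions, restricted to the closed half-space, cannot simultaneously (i) all be occupied, forcing $A$ to genuinely have six candidate neighbors, is ruled out only if fewer than six are present; and (ii) when fewer than six are present, they cannot form one of the rule-escaping disconnected configurations without either disconnecting $S$ locally or puncturing the contractible dual representation. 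I would make this precise by examining the induced neighborhood of an extremal vertex in $\Gfcc$ and checking, case by case, that every arrangement of occupied nodes in that half-neighborhood either (a) has $\le 5$ of them inducing a connected subgraph, or (b) has exactly two non-adjacent ones whose catty-corner position is also forced to be occupied (by extremality plus contractibility), or (c) would create a hole in the dual or disconnect $S$, contradicting the hypotheses on $S$.

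The third step handles the residual possibility that the extremal amoebot $A$ legitimately has six or more candidate neighbors. Here I would peel off the argument used implicitly in the 2D erosion proof of Di Luna et al.: if \emph{every} amoebot has degree $\ge 6$ in the candidate graph, then in particular the extremal amoebot does, which is already geometrically impossible---an amoebot maximal along a generic direction of $\Gfcc$ has at most six occupied neighbors, and achieving exactly six forces a very rigid local configuration (a half-cuboctahedron of neighbors) whose outermost amoebot along a \emph{secondary} direction then has strictly fewer than six, recursing downward. Iterating the extremal selection along a lexicographic order of two or three axes thus produces an amoebot with $\le 5$ candidate neighbors, sending us back to the case analysis of step two. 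Assembling these, every branch contradicts either connectivity, contractibility, or the assumption that no rule fires; hence some amoebot satisfies an erosion rule.

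The main obstacle I anticipate is step two: verifying that the rule-escaping disconnected neighborhoods genuinely cannot occur at an extremal amoebot of a contractible system. The subtlety is Rule~\ref{rule:square}'s catty-corner condition---I need to show that whenever an extremal $A$ has exactly two non-adjacent candidate neighbors $B$ and $D$, the fourth corner $C$ of their common square is \emph{also} a candidate, which is where contractibility (no hole around $A$) and extremality (the relevant square lies in the boundary half-space) must be combined carefully. This will likely require the same exhaustive polyhedron/genus enumeration over $\Gfcc$-neighborhoods that the authors already automated for Lemma~\ref{lem:safety}, repurposed here to certify that no rule-escaping boundary configuration exists.
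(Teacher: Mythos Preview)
Your approach is genuinely different from the paper's and has an appealing directness, but step~2 contains a real gap. You claim that if the lexicographically extremal amoebot $A$ has at most five neighbors forming a disconnected set (and Rule~\ref{rule:square} does not fire), this must ``disconnect $S$ locally or puncture the contractible dual representation.'' That is not true: $A$ can simply be a cut vertex of $S$. A cut vertex with a disconnected neighborhood is perfectly compatible with $S$ being connected and contractible---picture two contractible blobs joined through a single amoebot that happens to be extremal in your chosen direction. Nothing in the hypotheses is violated, yet no rule fires \emph{at $A$}. Your contradiction must then come from elsewhere in $S$, but the extremal selection gives you no handle on where to look; recursing to a secondary or tertiary axis does not help, since the new extremal amoebot may again be a cut vertex. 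The same issue undermines the Rule~\ref{rule:square} sub-case: contractibility of $S$ does not by itself force the catty-corner amoebot to exist at an extremal position.

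The paper sidesteps this obstacle by first peeling off all cut vertices via a block--cut style decomposition: ``chain'' amoebots (those with disconnected neighborhood, hence whose removal disconnects $S$) and the remaining 2-connected components are assembled into a tree, and a leaf of this tree is selected. A leaf is either a degree-one amoebot (Rule~\ref{rule:one}) or a 2-connected component $S'$ attached by a unique chain amoebot $C$. Within $S' \cup \{C\}$, every amoebot other than $C$ has a connected neighborhood \emph{by definition}, so the only remaining task is to exhibit one of degree at most five. For this the paper builds an auxiliary polyhedron $P$ from the external triangles of $S' \cup \{C\}$ and applies the discrete Gauss--Bonnet theorem: since $P$ is contractible ($g=0$, $\chi=2$), the vertex angle defects sum to $4\pi$, so at least two vertices have strictly positive defect, and a computer-verified case check shows positive defect forces degree $\le 5$ in $\Gfcc$. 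One of those two vertices is not $C$, and it satisfies Rule~\ref{rule:twofiveconnected}. Your lexicographic extremality does recover a degree bound by elementary means, which is attractive, but without first isolating a 2-connected piece you cannot guarantee the connected neighborhood that Rule~\ref{rule:twofiveconnected} requires; that guarantee is precisely what the tree decomposition buys.
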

\begin{proof}
    Argue by induction on $|S|$, the number of amoebots in $S$.
    When $|S| = 2$, the fact that $S$ is connected implies that the two amoebots in $S$ each have each other as their only neighbor in $S$.
    Thus, both of these amoebots satisfy Rule~\ref{rule:one}.
    
    So suppose that $|S| > 2$ and that, for all $2 \leq k < |S|$, any connected and contractible amoebot system on $\Gfcc$ comprising $k$ amoebots contains an amoebot satisfying an erosion rule.
    We decompose the amoebots of $S$ into two types: ``chain'' amoebots whose removal would disconnect $S$, and those belonging to the remaining 2-connected components.
    A \textit{chain} is a maximal set of amoebots $A_1, \ldots, A_\ell$ in $S$ satisfying one of the following cases: (i) if $\ell > 2$, then for all $1 < i < \ell$, $(A_{i-1}, A_i)$ and $(A_i, A_{i+1})$ are the only edges in $S$ incident to $A_i$, $(A_1, A_\ell) \not\in S$, and no other chain can exist with endpoints $A_1$ and $A_\ell$, (ii) if $\ell = 2$, then $(A_1, A_2) \in S$ and no edge of $S$ exists between $\{A_1\} \cup N_S(A_1)$ and $\{A_2\} \cup N_S(A_2)$, or (iii) if $\ell = 1$, then $N_S(A_1)$ is disconnected.
    The final condition in (i) regarding the uniqueness of a chain with given endpoints follows from the squares that exist in $\Gfcc$.
    The lattice dual representation of a square is four rhombic dodecahedrons that intersect at a single point; thus, no one chain between catty-corner endpoints can individually disconnect $S$ and thus are omitted.
    No other such ``parallel chains'' can exist in $S$ because it is contractible.
    
    Consider the graph composed of (i) all chain amoebots $\mathcal{C}$ in $S$ and (ii) each connected component of $S - \mathcal{C}$ contracted to a single node.
    Since $S$ is contractible, this graph must be a tree.
    Consider any leaf of this tree.
    If this leaf corresponds to an amoebot $A^*$ with exactly one neighbor in $S$, then $A^*$ satisfies Rule~\ref{rule:one}.
    Otherwise, the leaf corresponds to a connected component $S'$ of $S - \mathcal{C}$ with at least two amoebots.
    Since $S'$ is a leaf in the tree, there exists a unique chain amoebot $C \in \mathcal{C}$ connecting $S'$ to the rest of $S$.
    
    The remainder of this proof identifies an amoebot $A^* \neq C$ in $S'$ that satisfies erosion Rule~\ref{rule:twofiveconnected} or \ref{rule:square}.
    If there exists an amoebot $A^*$ in $S'$ such that $N_{S' \cup \{C\}}(A^*) = \{X, Y\}$ and there exists an amoebot $Z$ such that $A^*$, $X$, $Y$, and $Z$ form an induced (chordless) square in $\Gfcc$, then $A^*$ satisfies erosion Rule~\ref{rule:square}.
    
    Otherwise, if an amoebot satisfying Rule~\ref{rule:square} cannot be found in $S'$, we show there must exist an amoebot $A^*$ in $S'$ satisfying Rule~\ref{rule:twofiveconnected}.
    We do so by constructing an auxiliary (not necessarily convex) polyhedron $P$ from the nodes of $\Gfcc$ occupied by amoebots in $S' \cup \{C\}$ that is contractible.
    Let $P$ be the polyhedron in the 3D embedding of \Gfcc\ that contains all the nodes occupied by $S' \cup \{C\}$ and whose faces are composed of (unions of) external triangles induced by $S' \cup \{C\}$, where an external triangle induced by $S'\cup \{C\}$ is a triangle whose three vertices are mutually adjacent nodes occupied by amoebots in $S' \cup \{C\}$ and where at least one of its vertices is adjacent to a node not occupied by $S' \cup \{C\}$.
    Note that the fact that the union of rhombic dodecahedrons corresponding to $S'\cup \{ C\}$ is contractible, which is true by assumption, does not immediately imply that $P$ will also be contractible, since by construction of $P$, $P$ might possibly contain an "exposed" chordless square cycle of $\Gfcc$ on its surface whose nodes do not share a common neighbor: This is not possible, however, since if such a cycle existed in $P$, there would exist an amoebot $A^*$ satisfying Rule~\ref{rule:square} -- i.e., an amoebot $A$ in $S'$ such that $N_{S'\cup \{ C\}}(A)=\{X,Y\}$ and there exists an amoebot $Z$ such that $A,X,Y,Z$ form an induced (chordless) square cycle in $\Gfcc$. 
    
    
    The \textit{internal angle} of a face $F$ at its vertex $v$ is given by the angle internal to $F$ defined by the two edges of $F$ that contain vertex $v$.
    The \textit{total internal angle} of a vertex $v$ is the sum of internal angles of all faces incident to $v$, and its \textit{external angle} is given by $2\pi$ minus its total internal angle:
    \[\text{external-angle}(v) = 2\pi - \sum_{i \in k}\alpha_i,\]
    where $\alpha_1, \ldots, \alpha_k$ are the interior angles of faces $F_1, \ldots, F_k$ incident to $v$.
    The \textit{Euler characteristic} on closed surfaces is given by
    \[\chi = 2 - 2g,\]
    where $g$ is the genus of the surface.
    Since $P$ was shown to be contractible, its genus is $g = 0$ and its Euler characteristic is $\chi = 2$.
    From the proof of the discrete version of the Gauss--Bonnet theorem \cite{Crane2013-digitalgeometry},
    \[\sum_{v \in P} \text{external-angle}(v) = 2\pi\chi.\]
    Thus, the sum of external angles of all vertices of $P$ is equal to $4\pi$.
    
    The external angle of any vertex can be at most $2\pi$, so there must exist at least two vertices of $P$ whose external angles are strictly positive.
    Since $C$ is the unique chain amoebot in $S' \cup \{C\}$, there must exist at least one non-chain amoebot $A^* \in S'$ at a vertex of $P$ with a strictly positive external angle, which by definition of a chain node must have a neighborhood in $S' \cup \{C\}$ that induces a connected subgraph.
    One can show via exhaustive search that any vertex of $P$ with a strictly positive external angle must correspond to a node with degree at most five in the embedded $\Gfcc$ lattice.\footnote{We verified this statement using a simple Mathematica program that enumerates all neighborhoods of six to eleven neighbors yielding a polyhedron vertex and verifies that its external angle is nonpositive; see the Supplemental Material link at the start of this paper.}
    Thus, $A^*$ must satisfy Rule~\ref{rule:twofiveconnected}.
   
    Therefore, there always exists an amoebot $A^*$ in $S$ satisfying an erosion rule.
    When $A^*$ erodes, $S - A^*$ remains connected and contractible by Lemma~\ref{lem:safety}, so the lemma follows by induction.
\end{proof}

The safety and progress lemmas yield our main theorem.

\begin{theorem} \label{thm:seq3d}
    Assuming 3D geometric space, assorted orientations, constant-size memory, and an unfair sequential adversary, Algorithm~\ref{alg:erosion} solves the leader election problem for any connected and contractible system of $n$ amoebots in $\bigo{n}$ rounds.
\end{theorem}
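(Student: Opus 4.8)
\section*{Proof proposal}

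The plan is to turn the two preceding lemmas into an invariant-plus-potential argument, tracking the set $\hat{S}$ of amoebots whose \candidate\ variable is not \false---that is, current candidates together with amoebots that have not yet executed \textsc{Setup}. The central claim is that $\hat{S}$ always induces a connected, contractible, nonempty subsystem of \Gfcc. Initially $\hat{S}$ is the entire system, which is connected and contractible by hypothesis, and \textsc{Setup}\ leaves $\hat{S}$ unchanged (it only turns \nil\ into \true). When an amoebot $A$ executes \textsc{Erode}, its guard forces every neighbor of $A$ to be non-\nil, so $A$'s neighbors inside $\hat{S}$ are exactly its candidate neighbors; moreover \textsc{CanErode}\ tests precisely whether those candidate neighbors---together with the catty-corner candidate for Rule~\ref{rule:square}, which also lies in $\hat{S}$---witness one of the three erosion rules. (Here I would first check that under a sequential adversary the \nbrcand\ bookkeeping is exact, i.e.\ $B.\nbrcand(p) = \true$ iff the port-$p$ neighbor of $B$ is currently a candidate, so that the \textsc{CanErode}\ implementation is a faithful test of the abstract Rule~\ref{rule:square}.) Hence $A$ satisfies an erosion rule with respect to $\hat{S}$, and Lemma~\ref{lem:safety} gives that $\hat{S}\setminus\{A\}$ remains connected and contractible. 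Nonemptiness is preserved because a lone amoebot of $\hat{S}$ has no candidate neighbors, so \textsc{CanErode}\ returns \false\ and it cannot erode (while if it is still \nil\ its only enabled action is \textsc{Setup}, which keeps it in $\hat{S}$).

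Second, I would dispatch the initialization phase: a \nil\ amoebot has only \textsc{Setup}\ enabled, and that guard cannot be falsified except by executing the action (only $A$ writes $A.\candidate$, and only via \textsc{Setup}\ or \textsc{Erode}, the latter requiring $A.\candidate = \true$). So every amoebot is continuously enabled until it runs \textsc{Setup}, and by the end of round $0$ no amoebot is \nil; from then on $\hat{S}$ coincides with the candidate set $S_{\mathrm{cand}}$, with $1 \le |S_{\mathrm{cand}}| \le n$.

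Third comes the progress count. I claim that in every round after round $0$ with $|S_{\mathrm{cand}}| \ge 2$, at least one \textsc{Erode}\ execution occurs. By Lemma~\ref{lem:progress} there is, at the start of the round, a candidate $A$ satisfying an erosion rule with respect to $S_{\mathrm{cand}}$; all of $A$'s neighbors are non-\nil\ and \textsc{CanErode}\ detects the rule, so \textsc{Erode}\ is enabled for $A$ and $A \in \mathcal{E}_i$. By the definition of a round, $A$ must complete an action or become disabled by the round's end. After round $0$ the candidate neighborhood of $A$ can only shrink, so the only ways $A$'s \textsc{Erode}\ guard can be falsified without $A$ itself eroding are that a candidate neighbor of $A$---or the catty-corner candidate relevant to Rule~\ref{rule:square}---erodes; and the only action $A$ itself could execute besides \textsc{Erode}\ is \textsc{DeclareLeader}, which would require all of $A$'s candidate neighbors to have just eroded. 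In every case an erosion happened during the round. Since nothing rejoins $S_{\mathrm{cand}}$ after round $0$, $|S_{\mathrm{cand}}|$ strictly decreases each round until it equals $1$, which takes at most $n-1$ rounds.

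Finally, once $|S_{\mathrm{cand}}| = 1$, say $S_{\mathrm{cand}} = \{A^*\}$, connectivity of $\hat{S} = S_{\mathrm{cand}}$ forces every neighbor of $A^*$ to have $\candidate = \false$, so \textsc{DeclareLeader}\ is enabled for $A^*$; within one more round it sets $A^*.\leader = \true$, which is never reset. No other amoebot ever declares leadership: a \false\ amoebot has no enabled action, and the same connectivity argument shows \textsc{DeclareLeader}'s guard can hold only for the unique surviving candidate; all variables use $\bigo{1}$ bits, honoring the constant-memory assumption. Summing one setup round, at most $n-1$ erosion rounds, and one declaration round gives $\bigo{n}$ rounds total ($n=1$ being the trivial degenerate case). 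The main obstacle I anticipate is not any single deep step but the bookkeeping where \textsc{Setup}\ and erosion interleave: making precise what ``satisfies an erosion rule with respect to the current structure'' means while some amoebots are still \nil, and verifying that the \nbrcand\ variables implement the Rule~\ref{rule:square} check faithfully, so that Lemmas~\ref{lem:safety} and~\ref{lem:progress} apply verbatim to $\hat{S}$.
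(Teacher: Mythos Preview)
Your proposal is correct and follows essentially the same approach as the paper: both track the set of amoebots with $\candidate \in \{\nil,\true\}$ as the invariant-carrying structure, argue that all \textsc{Setup} executions complete in the first round, and then use Lemmas~\ref{lem:safety} and~\ref{lem:progress} to force at least one erosion per subsequent round until a single candidate remains and declares itself leader. Your per-round progress argument is in fact slightly more careful than the paper's, since you explicitly handle the case where the erosion-eligible amoebot identified by Lemma~\ref{lem:progress} becomes disabled mid-round because some \emph{other} candidate (a neighbor or the catty-corner amoebot) eroded first, whereas the paper simply asserts that this amoebot ``must be activated and will erode.''
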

\begin{proof}
    We analyze the erosion process by applying Lemmas~\ref{lem:safety} and~\ref{lem:progress} to the structure of both null and real candidates $A$ with $A.\candidate \in \{\nil, \true\}$.
    At initialization, every amoebot is a null candidate, so the structure of candidates is connected and contractible by supposition.
    Lemma~\ref{lem:progress} guarantees that some candidate satisfies an erosion rule, but this does not immediately imply it can erode: it might only be a null candidate that has not yet executed \textsc{Setup}, or it might be a real candidate that has null candidate neighbors.
    In either case, the \textsc{Erode} action is disabled.
    
    Suppose that erosion progress is blocked, i.e., every candidate that satisfies an erosion rule---of which there must be at least one by Lemma~\ref{lem:progress}---has its \textsc{Erode} action disabled.
    Since the candidate structure is connected and contractible by Lemma~\ref{lem:safety}, the \textsc{DeclareLeader} action is never enabled for any amoebot until the very end, when all but one amoebot has $\candidate = \false$.
    This certainly cannot occur while there are still null candidates in the system.
    So even the unfair adversary has no choice but to activate some null candidate $A$ waiting to execute \textsc{Setup}, which is guaranteed to be continuously enabled for $A$ from initialization to its first activation.
    
    Thus, regardless of the unfair adversary's choice of activation, either a null candidate executes \textsc{Setup} to become a real candidate or a real candidate erodes by executing \textsc{Erode}.
    The guard of \textsc{Erode} ensures that any such erosion occurs after the candidate's neighbors have all executed their \textsc{Setup} actions, avoiding any discrepancies between the erosion rules dealing only with real candidate neighbors and this proof's inclusion of null candidates.
    Lemma~\ref{lem:safety} thus guarantees that any erosion maintains the connectivity and contractibility of the candidate structure, and Lemma~\ref{lem:progress} guarantees once again that some candidate in the remaining structure satisfies an erosion rule.
    Repeatedly applying this argument shows that the candidate structure is eventually reduced to a single candidate, for which \textsc{DeclareLeader} becomes enabled.
    This is the only enabled amoebot remaining in the system, so the unfair adversary must activate it, at which point the amoebot declares itself leader and the algorithm terminates.
    
    It remains to bound the worst-case runtime of our algorithm on a system of $n$ amoebots, i.e., the largest number of rounds that can complete before termination based on any possible activation sequence of the unfair adversary.
    As observed earlier, the \textsc{Setup} action is continuously enabled for each amoebot from initialization to its first activation, implying that the first round does not complete until all $n$ amoebots have executed their \textsc{Setup} actions, becoming (real) candidates.
    In the worst case, these are the only action executions that occur in the first round.
    At the start of the next round, Lemma~\ref{lem:progress} ensures there exists a candidate that satisfies an erosion rule; in the worst case, there is only one such candidate $A$.
    Because all other amoebots have moved past null candidacy, we are guaranteed that \textsc{Erode} is enabled for $A$.
    Thus, $A$ must be activated and will erode by the end of the second round.
    More generally, it can take at most $n - 1$ rounds for $n - 1$ candidates to erode and at most one additional round for the final candidate to declare itself leader.
    Therefore, our algorithm terminates within $\bigo{n}$ rounds in the worst case.
\end{proof}

Recall from Section~\ref{sec:3dspace} that $\Gtri$ used in 2D geometric space can be treated as a single triangular lattice in $\Gfcc$ used in 3D geometric space.
Certainly any connected and contractible 2D amoebot system is thus also a connected and contractible 3D system, yielding the following corollary.

\begin{corollary} \label{cor:seq2d}
    Theorem~\ref{thm:seq3d} also holds in 2D geometric space.
\end{corollary}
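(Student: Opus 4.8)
The plan is to realize every connected, contractible 2D geometric amoebot system as a connected, contractible 3D geometric amoebot system and then quote Theorem~\ref{thm:seq3d} directly. First I would fix the embedding described in Section~\ref{sec:3dspace}: place the given 2D system on $\Gfcc$ so that all amoebots occupy nodes of a single common triangular lattice $T \subseteq \Gfcc$, i.e., all amoebots have the same home lattice (hence the same view), with each amoebot's chirality reinterpreted as its spin and its direction as its rotation about the spin vector. Since the subgraph of $\Gfcc$ induced by the occupied nodes is identical to the subgraph of $\Gtri$ induced by them, the embedded system is connected exactly when the original one was; and because ``assorted orientations,'' ``constant-size memory,'' and ``unfair sequential adversary'' are the weakest options in their respective model dimensions, these hypotheses of Theorem~\ref{thm:seq3d} carry over verbatim.

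Next I would check that contractibility is preserved under the embedding. The 3D lattice dual representation of the embedded system is the closed union $W$ of the rhombic dodecahedra centered at the occupied nodes of $T$; as all of these centers lie in a single plane $\pi$ orthogonal to the common spin direction, $W$ is a ``one-layer slab.'' Projecting $W$ orthogonally along the spin direction onto $\pi$ carries it onto the closed union of hexagons that is precisely the 2D lattice dual representation of the original system, and this projection is a deformation retraction of $W$ onto that planar region. Hence $W$ is homotopy equivalent to the 2D dual representation, so it contains a topological hole if and only if the latter does; since the original system is contractible by hypothesis, so is its embedding in $\Gfcc$.

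Having established that the embedded system is a connected, contractible 3D geometric amoebot system, the last step is simply to invoke Theorem~\ref{thm:seq3d}: running Algorithm~\ref{alg:erosion} on it elects a unique leader within $\bigo{n}$ rounds, and this is the same as running the algorithm on the original 2D system, since the induced subgraph, the port labelings, and hence the entire execution are unchanged. As a sanity check one can also observe that a single triangular lattice contains no chordless $4$-cycle, so Rule~\ref{rule:square} is never triggered within a 2D system and the execution is governed entirely by Rules~\ref{rule:one} and~\ref{rule:twofiveconnected}, which depend only on the occupied-node subgraph. The only step with any real content is the deformation-retraction claim used for contractibility, but this is merely the standard fact that a prism over a planar region is homotopy equivalent to that region, instantiated for the one-layer rhombic-dodecahedral slab over the hexagonal dual region; I anticipate no genuine obstacle.
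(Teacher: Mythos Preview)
Your proposal is correct and follows the same approach as the paper: embed the 2D system as a single triangular layer of $\Gfcc$ and invoke Theorem~\ref{thm:seq3d}. The paper's own justification is a one-sentence remark preceding the corollary (together with the observation afterward that Rule~\ref{rule:square} never fires in 2D), so your version is simply a more careful unpacking of the same idea, including the explicit deformation-retraction argument for contractibility that the paper leaves implicit.
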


Notably, the special case involving erosion in squares of candidates never occurs in 2D geometric space, so an even simpler version of our algorithm that omits \nbrcand\ variables and Rule~\ref{rule:square} could be used to solve leader election in 2D.

\section{Asynchronous Leader Election} \label{sec:async}

The \textit{concurrency control framework} for amoebot algorithms uses the canonical amoebot model's \Lock\ operation to provide isolation among concurrent amoebot's actions~\cite{Daymude2021-canonicalamoebot}.
Specifically, the framework takes as input any amoebot algorithm that achieves a desired behavior under a sequential adversary and---provided the algorithm satisfies certain conventions---constructs another algorithm that produces equivalent behavior under an asynchronous adversary by carefully wrapping the original algorithm's actions in \Lock\ calls and reorganizing their operations.
In this section, we demonstrate that our algorithm for leader election by erosion is compatible with the concurrency control framework, proving the existence of an amoebot algorithm that solves leader election under an unfair asynchronous adversary.
In order to apply the framework, we must first show that our algorithm satisfies the framework's three conventions:
\begin{enumerate}
    \item \textit{Validity.} For all system configurations in which an action $\alpha$ is enabled for an amoebot $A$, the execution of $\alpha$ by $A$ should be successful when all other amoebots are inactive.
    
    \item \textit{Phase Structure.} Each action should structure its operations as: (1) a ``compute phase'', during which an amoebot performs a finite amount of computation and a finite sequence of \Connected, \Read, and \Write\ operations, and (2) a ``move phase'', during which an amoebot performs at most one movement operation decided upon in the compute phase.
    
    \item \textit{Monotonicity.} Each action should be ``monotonic'', a lengthy technical condition describing actions' resilience to concurrent movements.
    Because our algorithm is \textit{stationary}, not involving any movement operations, we omit this definition.
\end{enumerate}

\begin{lemma} \label{lem:conventions}
    Algorithm~\ref{alg:erosion} satisfies the validity, phase structure, and monotonicity conventions of the concurrency control framework for amoebot algorithms.
\end{lemma}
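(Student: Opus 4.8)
The plan is to verify the three conventions one at a time, exploiting the fact that Algorithm~\ref{alg:erosion} is \emph{stationary} and is built only from the communication operations \Connected, \Read, and \Write; this makes each check routine.

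First I would handle \emph{validity}. For each of the three actions I would argue that, whenever its guard holds and no other amoebot is active, every operation in its body succeeds. The key observations are: a \Write\ on port $\bot$ accesses $A$'s own public memory and never fails; every \Read\ or \Write\ on a neighbor's port is either inside a loop guarded by an explicit \Connected$(p)$ test (as in \textsc{Setup} and \textsc{Erode}) or is directed at a port known to host a candidate neighbor (the \nbrcand\ reads performed by \textsc{CanErode} for Rule~\ref{rule:square}, where the target $B$ or $D$ is by hypothesis a candidate neighbor of $A$); and a guard's evaluation only inspects $A$'s own memory plus the \candidate\ fields of the neighbors $B \in N(A)$, all of which exist. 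Hence every enabled action executes to completion in isolation.

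Next I would dispatch \emph{phase structure}: each action body is a finite amount of computation together with a finite sequence of \Connected, \Read, and \Write\ operations --- the loops run over the at most $12$ ports of $A$ in $\Gfcc$ (or $6$ in $\Gtri$) --- and contains no movement operation. So every action already has the required shape of a compute phase followed by an empty move phase. Finally, \emph{monotonicity} is vacuous: it only constrains how an action's planned movement interacts with concurrent neighbor movements, and our actions perform no movements, so the convention holds trivially (matching the paper's own remark that it may be omitted for stationary algorithms).

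\textbf{Main obstacle.} There is no substantial difficulty here; the one place that warrants care is the validity check for the \textsc{CanErode} reads implementing Rule~\ref{rule:square}. I must confirm that the catty-corner lookup is always realized as a \Read\ of the public memory of an \emph{existing} neighbor of $A$ (namely $B$ or $D$), never of the distance-$2$ candidate $C$ itself, and that the port label passed to that \Read\ is well defined precisely because neighboring amoebots know one another's orientations.
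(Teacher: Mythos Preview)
Your proposal is correct and follows essentially the same approach as the paper: verify validity, phase structure, and monotonicity one by one, each time reducing to the observation that the algorithm is stationary and uses only \Connected, \Read, and \Write. Your version is somewhat more detailed (explicitly tracing which \Read/\Write\ targets are guaranteed to exist, including the Rule~\ref{rule:square} \nbrcand\ lookups), whereas the paper simply notes that \Read/\Write\ can only fail when a neighbor moves away, which never happens in a stationary algorithm.
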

\begin{proof}
    Actions in Algorithm~\ref{alg:erosion} only use \Connected, \Read, and \Write\ operations.
    The validity convention requires that every isolated execution of an enabled action succeeds.
    \Connected\ operations never fail.
    \Read\ and \Write\ operations only fail when the neighbor whose memory is being accessed disconnects from the calling amoebot due to a movement, which never happens in our algorithm because it is stationary.
    Thus, all executions of enabled actions in our algorithm succeed, satisfying validity.
    Phase structure can easily be verified by observing that---since our algorithm is stationary---its actions are entirely composed of compute phases.
    Finally, as observed in the concurrency control framework publication~\cite{Daymude2021-canonicalamoebot}, all stationary algorithms are trivially monotonic.
\end{proof}

Theorem 7 of~\cite{Daymude2021-canonicalamoebot} states that if an algorithm satisfies the framework's three conventions and every sequential execution of the algorithm terminates, then every asynchronous execution of the corresponding algorithm produced by the framework terminates in a configuration that was also reachable by the original algorithm.
Lemma~\ref{lem:conventions} shows that our algorithm for leader election satisfies the framework's three conventions, and Theorem~\ref{thm:seq3d} shows that every sequential execution of our algorithm terminates.
Therefore, we have the following corollary.

\begin{corollary} \label{cor:async}
    There exists an algorithm that solves the leader election problem for any connected and contractible amoebot system under the assumptions of 2D or 3D geometric space, assorted orientations, constant-size memory, and an unfair asynchronous adversary.
\end{corollary}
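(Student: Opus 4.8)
The plan is to invoke the concurrency control framework of~\cite{Daymude2021-canonicalamoebot} essentially as a black box, having already assembled both of its hypotheses. First I would recall Theorem~7 of~\cite{Daymude2021-canonicalamoebot}: given any amoebot algorithm $\mathcal{A}$ that (i) satisfies the validity, phase structure, and monotonicity conventions and (ii) has the property that every sequential execution terminates, the framework produces an algorithm $\mathcal{A}'$ such that every asynchronous execution of $\mathcal{A}'$ terminates in a configuration that is reachable by some sequential execution of $\mathcal{A}$. Lemma~\ref{lem:conventions} supplies hypothesis (i) for Algorithm~\ref{alg:erosion}, and Theorem~\ref{thm:seq3d} together with Corollary~\ref{cor:seq2d} supplies hypothesis (ii) in both 2D and 3D geometric space (under assorted orientations and constant-size memory). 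Applying the framework therefore immediately yields a concrete algorithm $\mathcal{A}'$, with the same space, orientation, and memory assumptions, whose asynchronous executions all terminate in configurations reachable by Algorithm~\ref{alg:erosion}.

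The next step is to translate this reachability guarantee into a statement about leaders. The proof of Theorem~\ref{thm:seq3d} shows that every terminating sequential execution of Algorithm~\ref{alg:erosion} ends with exactly one amoebot having $\leader = \true$; moreover, along any such execution the \textsc{DeclareLeader} action is enabled only for the unique surviving candidate, and $\leader$ is only ever written $\true$ (never reset to $\false$). Consequently, every configuration reachable by Algorithm~\ref{alg:erosion} has \emph{at most} one amoebot with $\leader = \true$, and every \emph{terminal} such configuration has \emph{exactly} one. Combining this with the framework's guarantee, every asynchronous execution of $\mathcal{A}'$ terminates in a configuration with exactly one leader; since $\leader$ remains write-once in $\mathcal{A}'$, no amoebot ever declares itself leader other than that one. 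Hence $\mathcal{A}'$ solves the leader election problem under an unfair asynchronous adversary, which is the corollary.

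I do not expect a serious obstacle: the technical weight has been front-loaded into Lemma~\ref{lem:conventions} and Theorem~\ref{thm:seq3d}, and the remaining argument is a short chain of implications. The only point deserving a sentence of care is the passage from ``$\mathcal{A}'$ terminates in a reachable configuration'' to ``$\mathcal{A}'$ solves leader election,'' which relies on the two structural observations above (reachable configurations never contain two leaders, and $\leader$ is monotone); both are immediate from the form of Algorithm~\ref{alg:erosion}. It is also worth noting that the leader election specification only demands an \emph{eventual}, irrevocable, unique leader, so transient intermediate asynchronous configurations in which no amoebot has $\leader = \true$ are harmless and require no separate treatment.
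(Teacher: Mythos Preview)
Your proposal is correct and follows essentially the same route as the paper: invoke Theorem~7 of~\cite{Daymude2021-canonicalamoebot} as a black box, with Lemma~\ref{lem:conventions} supplying the three conventions and Theorem~\ref{thm:seq3d} (together with Corollary~\ref{cor:seq2d} for the 2D case) supplying sequential termination. Your second paragraph, which explicitly argues why ``terminates in a reachable configuration'' implies ``solves leader election,'' is more careful than the paper---which leaves that step implicit---but the approach is the same.
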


We note that the runtime of this asynchronous algorithm depends on the overhead of the concurrency control framework which has not yet been analyzed.

\section{Conclusion} \label{sec:conclude}

In this work, we presented the 3D geometric space variant for the amoebot model, extending the well-established model of programmable matter to 3D space.
We then presented a deterministic distributed algorithm for electing exactly one leader in $\bigo{n}$ rounds under an unfair sequential adversary (Theorem~\ref{thm:seq3d}) for connected and contractible systems, extending and simplifying the erosion-based approach of Di Luna et al.~\cite{DiLuna2020-shapeformation}.
This algorithm can be flexibly applied to both 2D and 3D space (Corollary~\ref{cor:seq2d}) and improves over the related algorithm for 3D leader election by Gastineau et al.~\cite{Gastineau2022-leaderelection} by extending the set of amoebot orientations and replacing the assumption of 2-neighborhood vision with local comparisons between 1-neighbors' orientations.
Finally, we proved that this algorithm can be transformed using the concurrency control framework for amoebot algorithms~\cite{Daymude2021-canonicalamoebot} to obtain the first known amoebot algorithm that solves leader election under an unfair asynchronous adversary (Corollary~\ref{cor:async}).

Although our leader election algorithm is the first to make use of the concurrency control framework for asynchronous correctness, several others discussed in Section~\ref{sec:problem} could likely do the same.
The proof of Lemma~\ref{lem:conventions} suggests that any stationary algorithm adhering to the standard amoebot assumptions (e.g., 1-neighborhood vision) will likely satisfy the framework's three conventions, and all of the deterministic algorithms already have proofs of termination under a sequential adversary or some weaker scheduler.
Some work is required to translate existing algorithms into the action-based semantics used by the canonical amoebot model and to reprove their correctness with respect to the new formulation, but the framework's subsequent application seems straightforward.

One potentially interesting extension of this work is to modify the leader election algorithm such that when it is run on an initially connected amoebot system that \textit{contains holes}, the existence of holes can be identified in a distributed way.
Just as the topological definition of holes identifies the inability to shrink the object to a single point, so too could the inability for our algorithm to erode a system to a single amoebot be a clue about the existence of holes.
Our algorithm already confirms the absence of holes as this is a necessary condition for a leader to emerge, but perhaps could be extended---likely with additional amoebot communication---to decide when holes exist.

Finally, we showed that our algorithm for leader election in 3D geometric space could be immediately applied without modification to a 2D geometric system as a special case (Corollary~\ref{cor:seq2d}).
This may not necessarily be true of every problem and algorithm.
What existing algorithms for 2D geometric systems can be viewed as special cases of generalized 3D algorithms?
What fundamental characteristics of a given algorithm or problem inhibit this translation?
Answers to these questions will accelerate the development of 3D amoebot algorithms and their potential applications to amoebots' modular robotic counterparts.

\bibliographystyle{plainurl}
\bibliography{ref}

\end{document}